\newtheorem{theorem}{Theorem}[section]
\newtheorem{corollary}[theorem]{Corollary}
\newtheorem{proposition}[theorem]{Proposition}
\theoremstyle{definition}
\newtheorem{remark}[theorem]{Remark}
\numberwithin{equation}{section}
\newcommand{\R}{\mathbb R}
\newcommand{\C}{\mathbb C}
\renewcommand{\det}{{\,\rm det}\:}
\newcommand{\Tr}{{\,\rm Tr}\:}
\newcommand{\erfc}{{\,\rm erfc}\:}
\renewcommand{\exp}[1]{\mathrm{exp} \left\{#1\right\}}
\begin{document}

\title{\bf On statistics of bi-orthogonal eigenvectors in real and complex Ginibre ensembles: combining partial Schur decomposition with supersymmetry. }

\author{\Large Yan V Fyodorov	\footnote{yan.fyodorov@kcl.ac.uk}\\[0.5ex] {\small Department of Mathematics, King's College London}\\
{\small Strand, London, WC2R 2LS, UK}}

\date{05 March 2018}

\maketitle

\begin{abstract}
	We suggest a method of studying the joint probability density (JPD) of an eigenvalue and the associated 'non-orthogonality overlap factor' (also known as the 'eigenvalue condition number') of the  left and right eigenvectors for non-selfadjoint Gaussian random matrices of size $N\times N$. First we derive the general finite $N$ expression for the JPD of a real eigenvalue $\lambda$ and the associated non-orthogonality factor in the real Ginibre ensemble, and then analyze its 'bulk' and 'edge' scaling limits. The ensuing distribution is maximally heavy-tailed, so that all integer moments beyond normalization are divergent. A similar calculation for a complex eigenvalue $z$ and the associated non-orthogonality factor in the complex Ginibre ensemble is presented as well and yields a distribution with the finite first moment. Its  'bulk' scaling limit yields a distribution whose first moment reproduces the well-known result of Chalker and Mehlig  \cite{ChalkerMehlig1998}, and we provide the 'edge' scaling distribution for this case as well. Our method involves evaluating the ensemble average of products and ratios of integer and half-integer powers of characteristic polynomials for Ginibre matrices, which we perform in the framework of a supersymmetry approach. Our paper complements recent studies by Bourgade and Dubach \cite{BourgadeDubach}.

\end{abstract}

\section{Introduction}
Let $\mathbf{x}$ be a $N-$ component column vector, real or complex.  We will use $\mathbf{x}^T=(x_1,\ldots, x_N)$ to denote the corresponding transposed row vector ( and similar notation for matrices), and $\mathbf{x}^*=(\overline{x}_1,\ldots, \overline{x}_N)$ for the Hermitian conjugate, with bar standing for complex conjugation. The inner product of two such vectors will be denoted  as $\mathbf{x}_1^*\mathbf{x}_2=\sum_{i=1}^N \overline{x}_{1i}x_{2i}$.

Let  $G$ be a $N\times N$ matrix which we assume to be non-selfadjoint and not normal : $G^*\ne G, \, G^{*}G\ne GG^{*}$. We will further assume that all $N$ eigenvalues $\lambda_a, \, a=1,\ldots,N$ of this matrix, which are in general complex numbers, have multiplicity one. Then the matrix is diagonalizable by a similarity transformation: $G=S\Lambda S^{-1}$ where $\Lambda=\mbox{diag}\left(\lambda_1,\ldots,\lambda_N\right)$ and $S$ is in general non-unitary: $S^*\ne S^{-1}$. The associated {\it right} eigenvectors defined by  $G\,\mathbf{x}_{Ra}=\lambda_a \mathbf{x}_{Ra}$ are columns of the matrix $S$, whereas their {\it left} counterparts
 satisfying  $\mathbf{x}_{La}^* G=\lambda_a \mathbf{x}^*_{La}$ form the rows of $S^{-1}$, and generically $\mathbf{x}_{Ra}\ne \mathbf{x}_{La}$. The sets $\mathbf{x}^*_{La}$ and $\mathbf{x}_{Ra}$ of left and right eigenvectors  can always be chosen to satisfy the {\it bi-orthonormality} condition $\mathbf{x}^*_{La}\mathbf{x}_{Rb}=\delta_{ab}$ for $a,b=1,\ldots,N$, but non-unitarity of $S$ implies that $\mathbf{x}^*_{Rb}\mathbf{x}_{Ra}\ne \delta_{ab}$ and similarly $\mathbf{x}^*_{La}\mathbf{x}_{Lb}\ne \delta_{ab}$.
  Then the simplest informative object characterizing the eigenvector non-orthogonality is the so-called 'overlap matrix' $\mathcal{O}_{ab}=(\mathbf{x}^*_{La}\mathbf{x}_{Lb})(\mathbf{x}^*_{Rb}\mathbf{x}_{Ra})$.
  In particular, the real diagonal entries $\mathcal{O}_{aa}$ are
  known in the literature on numerical analysis as {\it eigenvalue condition numbers} and characterize sensitivity of eigenvalues $\lambda_a$ to perturbation of entries of $G$, see e.g. \cite{TrefethenBook}. Namely, consider a family of matrices $G(\alpha)=G+\alpha V$, with $V$ being an arbitrary matrix whose $2-$norm is fixed as $||V||_2=1$, whereas $\alpha$ is a real parameter controlling the magnitude of the perturbation. Denote, for a given $V$, the eigenvalues of $G(\alpha)$ as $\lambda_a(\alpha)$
and consider $\dot{\lambda_a}(\alpha)=\frac{d\lambda_a}{d\alpha}$. A standard calculation using bi-orthonormality shows that $\dot{\lambda}_a(0)= \mathbf{x}^*_{La}V\mathbf{x}_{Ra}$ and therefore $\left|\dot{\lambda}_a(0)\right|\le  |\mathbf{x}_{La}| \, ||V||_2\, |\mathbf{x}_{Ra}|=\mathcal{O}_{aa}^{1/2}$ showing indeed that $\mathcal{O}_{aa}$ controls the speed of change of eigenvalues under perturbation.
As for some classes of non-normal matrices $\mathcal{O}_{aa}\gg 1$, their eigenvalues could be much more sensitive to perturbations in comparison with their normal counterparts.

  If the matrix $G$ is random, it makes sense to be interested in statistics of $\mathcal{O}_{ab}$. This line of research originated from the influential papers by Chalker and Mehlig \cite{ChalkerMehlig1998, MehligChalker2000} who were the first to evaluate asymptotically, for large $N\gg 1$, the lowest moments of the form
\begin{equation}\label{meanoverlap}
\mathcal{O}(z)=\left\langle \sum_{a=1}^N \mathcal{O}_{aa}\delta(z-\lambda_a)\right\rangle_{Gin_2}, \quad \mathcal{O}(z_1,z_2)=\left\langle \sum_{a\ne b}^N
\mathcal{O}_{ab}\delta(z_1-\lambda_a)\delta(z_2-\lambda_b)\right\rangle_{Gin_2}
\end{equation}
where $\delta(z-\lambda_a)$ stands for the appropriate Dirac delta-distribution (so that e.g. the empirical density of eigenvalues at a (in general, complex) point $z$ is given by $\sum_{a=1}^N \delta(z-\lambda_a)$). The brackets $\left \langle \ldots \right\rangle_{Gin_2}$ denote here the expectation with respect to the probability measure on $G$ known as the complex Ginibre ensemble, which we denote in this paper as  $Gin_2$ to reflect that ensembles with complex entries are usually characterized by the Dyson index $\beta=2$, see below. The probability measure on $G$ with real entries known as the real Ginibre ensemble will be denoted correspondingly with $Gin_1$.

For $\beta=2$ Chalker and Mehlig were able to extract the leading asymptotic behaviour of $\mathcal{O}(z)$ and $\mathcal{O}(z_1,z_2)$ in the $N\gg 1$ limit. In particular, they found that $\mathcal{O}(z)\approx N^2(1-|z|^2)$ inside the' Ginibre circle' characterized by the asymptotic mean eigenvalue density $\left\langle \sum_{a=1}^N \delta(z-\lambda_a)\right\rangle_{Gin_2}\approx \frac{N}{\pi^2}$ for $|z|^2<1$ and zero otherwise. This suggests that typically one should expect $\mathcal{O}_{aa}\sim N$ for eigenvalues inside the circle, which is parametrically larger than $\mathcal{O}_{aa}=1$ typical for normal matrices.

In the last decades there was steady growth of interest in understanding properties of non-orthogonal random eigenvectors in theoretical physics, see
\cite{Janik1999,Scho00,MehligSanter2001,FyoMel2002,GS2011,FyoSav2012,NonorthExper,Burda2014,Burda2015,Belinchi2017,Burda2017,NT2018}, with emphasis on calculating the Chalker-Mehlig correlators (\ref{meanoverlap}) and related objects beyond the framework of the complex Ginibre ensemble.  One motivation comes from the abovementioned relevance of eigenvector correlations for describing the motion of complex eigenvalues under perturbations of the ensemble, see e.g.  \cite{Movassagh}, and associated Dysonian dynamics, see e.g. \cite{Burda2015} and Appendix A of \cite{BourgadeDubach}. Note that the non-orthogonality factors reflect non-normality of the matrix, which in the context of dynamical systems is known to give rise to a long transient behaviour, see a general discussion in \cite{Trefethen1993,Somp2008}. In a related setting non-symmetric matrices appear very naturally via linearization around an equilibrium in a complicated nonlinear dynamical system \cite{May1972,FyoKhor2016}, and the non-orthogonality factors then control transients in a relaxation towards equilibrium \cite{Grela2017}. Non-orthogonality also plays some role in analysis of spectral outliers in non-selfadjoint matrices, see e.g. \cite{MetzNeri} and references therein.
 Another strong motivation comes from the field of quantum chaotic scattering, where non-selfadjoint random matrices of special type (different from the Ginibre ensembles) play a prominent role, see
e.g. \cite{FyoSom1997,FyoSomRev2003,Rotter,Fyod11,RK2017} for the background information.    The corresponding non-orthogonality overlap matrix
$\mathcal{O}_{ab}$  shows up in various scattering observables, such as e.g. decay laws \cite{Savi97}, 'Petermann factors'  describing excess noise in open laser resonators \cite{Scho00}, as well as in sensitivity of the resonance widths to small perturbations \cite{FyoSav2012,NonorthExper}.
Unfortunately, main progress in understanding properties of the bi-orthogonal eigenvectors for such ensembles relied on treating non-Hermiticity perturbatively in a small parameter, whereas non-perturbative results are scarce \cite{Scho00,MehligSanter2001,FyoMel2002}.

In the Mathematics community a systematic rigorous research in this direction seems to have started only recently
\cite{WaltersStarr2015}. In a very recent development Bourgade and Dubach \cite{BourgadeDubach} demonstrated a possibility to find
the law of the random variable  $\mathcal{O}_{aa}$ for the complex Ginibre ensemble, asymptotically for large $N$,  and provided a valuable information
about the off-diagonal correlations between the two different eigenvectors at various scales of eigenvalue separation (the so-called 'microscopic' vs. 'mesoscopic' scales). That work motivated the present paper, where we use a rather different approach to consider the following object
\begin{equation}\label{overlapdistr}
\mathcal{P}(t,z)=\left\langle \sum_a \delta\left(\mathcal{O}_{aa}-1-t\right)\delta(z-\lambda_a)\right\rangle_{Gin_{1} \, \mbox{\tiny or} \, Gin_{2} }.
\end{equation}
 interpreted as the (conditional) probability density of the 'diagonal' (or 'self-overlap') non-orthogonality  factor $t=O_{aa}-1$ for  the right and left eigenvectors corresponding to eigenvalues in the vicinity of a point $z=x+iy$ in the complex plane. We will call it for brevity the joint probability density (JPD) of the two variables, $t$ and $z$. Note that this JPD is normalized in such a way that the integral
 $\int_{\mathbb{R}_{+}} \mathcal{P}(t,z)\,dt=\left\langle \sum_a \delta(z-\lambda_a)\right\rangle:=\rho_N(z)$ coincides with the mean density $\rho_N(z)$
of eigenvalues around point $z$ in the complex plane.

 Naturally, the JPD function $\mathcal{P}(t,z)$ can be defined for a general random matrix $G$ and, in particular, may be used to quantify the statistics
of eigenvalue sensitivity parameters for such matrices. To give an example, consider again the family of matrices $G(\alpha)=G+\alpha V$, but choose the perturbation $V$ to be a random matrix independent of $G$.  For simplicity one may take $V$ to be proportional to a random complex Ginibre matrix, and normalized in such a way that its entries $V_{ij}$ are i.i.d. mean zero complex numbers  with the variance $\left\langle \overline{V}_{ij}V_{kl}\right\rangle_{Gin_2}=\frac{1}{N}\delta_{ik}\delta_{jl}$. Then the eigenvalue sensitivity to such a perturbation is given by $\dot{\lambda}_a(0)= \mathbf{x}^*_{La}V\mathbf{x}_{Ra}$ and for a fixed $G$ becomes
a complex Gaussian variable with mean zero and variance $\left\langle |\dot{\lambda}_a(0)|^2\right\rangle_V= \frac{1}{N}\mathcal{O}_{aa}$.
 Define now the probability density $\pi(w,z)$ of the eigenvalue sensitivity at a point $z$ of the complex plane via $\pi(w,z)=\left\langle\sum_a \delta\left(w-\dot{\lambda}_a(0)\right)\delta(z-\lambda_a)\right\rangle_{G,V}$ where the ensemble averaging goes both over $G$ and over $V$.
 Since the complex Gaussian variable $w=\dot{\lambda}_a(0)$ has the density $\pi(w)=\frac{1}{\pi \left\langle |\dot{\lambda}_a(0)|^2\right\rangle_V }e^{-|w|^2/\left\langle |\dot{\lambda}_a(0)|^2\right\rangle_V}$ with respect to the Lebesgue measure
 $d(Im\,w)d(Re\,w)=\frac{1}{2}dwd\overline{w}$ and recalling $\mathcal{O}_{aa}=1+t$ we immediately see that
 \begin{equation}\label{eigensens}
\pi(w,z)=\int_0^{\infty}\frac{N}{\pi(1+t)}e^{-\frac{N}{1+t}|w|^2}\mathcal{P}(t,z)\,dt\,,
\end{equation}
relating the statistics of the eigenvalue sensitivity in that case to the knowledge of $\mathcal{P}(t,z)$.

In this paper we concentrate on finding explicit expressions for $\mathcal{P}(t,z)$ for Ginibre matrices, both real and complex. We first consider in Section 3 the case $Gin_{1}$ of real Ginibre matrices with $\beta=1$.
To this end it is useful to recall that real-valued matrices may have
either purely real eigenvalues or  pairs of complex conjugate eigenvalues. As the result, $\rho_N(z)$ for real Ginibre ensemble necessarily has the form $\rho_N(z)=\rho^{(c)}_N(z)+\delta\left(y \right)\rho^{(r)}_N(x)$, where the non-singular part $\rho^{(c)}_N(z)$ describes the mean density of complex  eigenvalues, whereas $\rho^{(r)}_N(x)$ describes the mean density of purely real eigenvalues, so that $\int_{a}^{b} \rho^{(r)}_N(\lambda)d\lambda$ stands for the mean number of real eigenvalues in an interval $[a,b]$ of the real axis. As a consequence, the introduced JPD $\mathcal{P}(t,z)$ inherits the same structure  $\mathcal{P}(t,z)=\mathcal{P}^{(c)}(t,z)+\delta(y)\mathcal{P}^{(r)}(t,x)$.

The organization of the paper is as follows.  A summary of the main results and discussion of possible directions for the future work is presented in the Section 2. We start our consideration with demonstrating in Section 3 a way to evaluate $\mathcal{P}^{(r)}(t,\lambda)$, which describes non-orthogonality factor for eigenvectors associated with a real eigenvalue $\lambda$ of the real Ginibre ensemble. First, we reduce the problem of finding $\mathcal{P}^{(r)}(t,\lambda)$ to a problem of evaluating certain ratios of determinants of random real-symmetric matrices with block structure, which as one may eventually see are intimately related to a deformed version of the so-called real chiral ensemble. Technical calculations within a framework of the supersymmetry approach which proves to be an efficient technical tool for dealing with such ratios of determinants are presented in Section 3.3. Our approach yields exact and explicit formula for any size $N$, which is then amenable to extracting the appropriate 'bulk' and 'edge' scaling limits as $N\to \infty$. The problem of evaluating $\mathcal{P}^{(c)}(t,z)$ for real Ginibre matrices remains presently outstanding, and we hope to be able to address it in a future publication.

 In the next Section 4 we apply essentially the same method for evaluating $\mathcal{P}^{(c)}(t,z)$ in the complex Ginibre ensemble $Gin_{2}$, i.e. $\beta=2$. The computations and results become somewhat more technically involved, and considerably simplify only for the special case $|z|=0$. General case is treated again by the supersymmetry approach outlined in Section 4.4. Eventually, we present an explicit finite-$N$ expression for any $z$, and then extract the corresponding 'bulk' and 'edge' scaling limits.

{\bf Acknowledgements}.  The author is most grateful to Paul Bourgade and Guillaume Dubach for generously communicating their unpublished results at an early stage which stimulated his own research on the topic.  Ramis Movassagh is acknowledged for an interesting discussion and bringing reference \cite{Movassagh} to the author's attention,   Gernot Akemann for pointing out \cite{QCDfinT} and Peter Forrester for mentioning \cite{DesFor2008}. Jacek Grela and Eugene Strahov are acknowledged for their collaboration on the associated analysis of Eq.(\ref{invcharpol}) using different methods \cite{FyoGrelStrah2017}. The present paper was started when preparing a lecture course for PCMI Summer School 2017, and essentially completed during the Beg Rohu Summer School 2017.  The author would like to thank the organizers and participants of the schools for creating a stimulating atmosphere, and for the financial support of his participation in the events, in particular from the NSF grant DMS:1441467. The research at King's College London was supported by  EPSRC grant  EP/N009436/1 "The many faces of random characteristic polynomials".

\section{Discussion of the main results}
\subsection{Real Ginibre ensemble}
Note that the left and right eigenvectors of real-valued matrices corresponding to real eigenvalues $\lambda$ can be chosen real as well.
Hence we may write $\mathbf{x}_{\lambda,L}^T$ instead of $\mathbf{x}_{\lambda,L}^*$.
\begin{theorem}\label{Theorem1}
  Consider the real Ginibre ensemble of $N\times N$ square matrix $G \in \mathcal{M}_N\left(\R\right)$ with independent
identically distributed standard Gaussian real matrix elements:
	\begin{equation}\label{Ginelements}
		G_{j,k} \sim \mathcal{N}\left(0,1\right).
	\end{equation}
 Let $\lambda$ be a real eigenvalue of $G$ which without reducing generality may be assumed simple as Ginibre matrices  with multiple eigenvalues have zero Lebesgue measure in
	 $\R^{N\times N}$. Further denote $\mathbf{x}_{\lambda,R}$ and $\mathbf{x}_{\lambda,L}^T$ the associated right and left  eigenvectors chosen to satisfy the normalization condition $\mathbf{x}^T_{\lambda,L}\mathbf{x}_{\lambda,R}=1$. Define  the so-called 'diagonal' (or 'self-overlap') nonorthogonality factor ${\cal O}_{\lambda}=(\mathbf{x}^T_{\lambda,L}\mathbf{x}_{\lambda,L})(\mathbf{x}^T_{\lambda,R}\mathbf{x}_{\lambda,R})$.  Then the joint probability density $\mathcal{P}^{(r)}(t,\lambda)$ of the variables $t={\cal O}_{\lambda}-1$ and the eigenvalue $\lambda$ is given for any $N\ge 2$ by
 \begin{equation}\label{Mainpositive}
\mathcal{P}^{(r)}(t,\lambda)=\frac{1}{2\sqrt{2\pi}} e^{-\frac{\lambda^2}{2}\left(1+ \frac{t}{1+t}\right)}\frac{t^{\frac{N-3}{2}}}{(1+t)^{\frac{N+1}{2}}}
\sum_{k=0}^{N-1}\frac{\lambda^{2k}}{k!}\left[(N-1-k)+\frac{k}{1+t}\right]
\end{equation}

\end{theorem}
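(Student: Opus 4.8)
The plan is to reduce the computation to an average over a lower-dimensional real Ginibre ensemble of a product of characteristic polynomials, and then to evaluate that average by supersymmetry.

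First I would exploit the $O(N)$-invariance of the Gaussian weight to rotate the (unit) right eigenvector of the real eigenvalue $\lambda$ onto $e_1=(1,0,\ldots,0)^T$. In this frame $e_1$ is a right eigenvector, so the first column of $G$ is $(\lambda,0,\ldots,0)^T$ and
\begin{equation*}
G=\begin{pmatrix}\lambda & \mathbf{b}^T\\ \mathbf{0} & G_1\end{pmatrix},
\end{equation*}
with $G_1$ an $(N-1)\times(N-1)$ real matrix and $\mathbf{b}\in\R^{N-1}$. Writing the left eigenvector as $\mathbf{x}_{\lambda,L}^T=(1,\mathbf{c}^T)$, consistent with $\mathbf{x}_{\lambda,L}^T\mathbf{x}_{\lambda,R}=1$, and imposing $\mathbf{x}_{\lambda,L}^TG=\lambda\,\mathbf{x}_{\lambda,L}^T$ gives $\mathbf{c}=(\lambda I-G_1^T)^{-1}\mathbf{b}$, hence $\mathbf{x}_{\lambda,R}^T\mathbf{x}_{\lambda,R}=1$ and $\mathbf{x}_{\lambda,L}^T\mathbf{x}_{\lambda,L}=1+|\mathbf{c}|^2$, so that $t={\cal O}_{\lambda}-1=|\mathbf{c}|^2=\mathbf{b}^T(\lambda I-G_1)^{-1}(\lambda I-G_1^T)^{-1}\mathbf{b}$. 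The change of variables $G\mapsto(\lambda,\,\text{eigenvector direction},\,\mathbf{b},\,G_1)$ carries the standard partial-Schur Jacobian $|\det(\lambda I-G_1)|$, the eigenvector direction integrates to a constant sphere volume, and $\Tr GG^T=\lambda^2+|\mathbf{b}|^2+\Tr G_1G_1^T$, so the Gaussian weight factorizes.

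Next I would integrate over $\mathbf{b}$. Representing the constraint by $\delta(x)=\frac{1}{2\pi}\int dp\,e^{ipx}$ turns the $\mathbf{b}$-integral into a Gaussian one; performing it and recombining with the Jacobian factor converts the absolute value into a clean positive power,
\begin{equation*}
\mathcal{P}^{(r)}(t,\lambda)=C\,e^{-\lambda^2/2}\int\frac{dp}{2\pi}\,e^{ipt}\,\Big\langle \det(\lambda I-G_1)^2\,\det\big((\lambda I-G_1^T)(\lambda I-G_1)+2ip\,I\big)^{-1/2}\Big\rangle_{Gin_1,\,N-1},
\end{equation*}
with $C$ a normalization fixed along the way. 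The dimension $N-1$ of the $\mathbf{b}$-integral is exactly what generates the prefactor $t^{\frac{N-3}{2}}$ (the volume of a quadratic-form level set in $N-1$ variables), while the factor $(1+t)^{-\frac{N+1}{2}}$ and the extra exponential $e^{-\frac{\lambda^2}{2}\frac{t}{1+t}}$ will emerge from the $G_1$-average together with the inverse transform in $p$. The object to evaluate is thus an average over the real Ginibre ensemble of size $N-1$ of a product of an integer and a half-integer power of characteristic polynomials, and the combination $(\lambda I-G_1^T)(\lambda I-G_1)+2ip\,I$ is precisely the deformed real chiral structure referred to in the introduction.

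To compute this average I would use the supersymmetry approach. The positive factor $\det(\lambda I-G_1)^2$ is written as a Gaussian integral over anticommuting (Grassmann) vectors, and the half-integer inverse power $\det(\cdots)^{-1/2}$ as a Gaussian integral over a single real commuting (bosonic) vector, the chiral/real structure dictating the bosonic symmetry class and requiring a contour rotation of $p$ to secure convergence. The average over $G_1$ is then Gaussian and produces quartic terms in the super-vectors; a Hubbard--Stratonovich transformation decouples these into an integral over an auxiliary supermatrix whose size is fixed, independent of $N$. Integrating out the super-vectors leaves a low-dimensional superintegral in $\lambda$ and $p$; evaluating it, performing the inverse Fourier transform in $p$, and simplifying should reproduce (\ref{Mainpositive}), the truncated-exponential sum $\sum_{k=0}^{N-1}\lambda^{2k}/k!\,[\cdots]$ arising from the finite polynomial generated by the fermionic sector. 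The main obstacle is this supersymmetric evaluation itself: correctly handling the $-1/2$ power over the real ($\beta=1$) ensemble --- the convergence and symmetry class of the bosonic sector and the attendant regularization of $2ip$ --- and reducing the resulting $\beta=1$ chiral superintegral, with its square roots and orthogonal/symplectic structure, to the explicit finite sum with the precise constant $\frac{1}{2\sqrt{2\pi}}$. The reduction of the first two paragraphs is comparatively routine, which is why the delicate superintegral is relegated to the dedicated technical section.
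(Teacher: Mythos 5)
Your overall strategy is the same as the paper's: a partial Schur reduction expressing $t$ as $|\mathbf{c}|^2$ with $\mathbf{c}=(\lambda I-G_1^T)^{-1}\mathbf{b}$, integration over $\mathbf{b}$ to produce an average over $Gin_1$ of size $N-1$ of $\det(\lambda I-G_1)^2$ against an inverse half-integer power of the shifted chiral matrix, and a supersymmetric evaluation of that average. Your first two paragraphs are correct and match the paper's Sections 3.1--3.2, with one cosmetic difference: you Fourier-transform the density in $t$ (delta-function representation, shift $2ip$), whereas the paper takes a Laplace transform in $t$ with real $p>0$. The paper's choice is not innocent bookkeeping: it keeps every integral manifestly convergent without contour rotations, and the supersymmetric computation ends by itself in the form $\int_0^\infty e^{-pt}(\cdots)\,dt$, so the inversion is read off rather than performed.

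The genuine gap sits exactly where you place "the main obstacle", and it is more than a hard-but-routine superintegral: your setup, as described, breaks the mechanism you invoke. Writing $\det^{-1/2}\bigl((\lambda I-G_1^T)(\lambda I-G_1)+2ip\,I\bigr)$ as a Gaussian integral over a \emph{single} real vector $\mathbf{S}$ puts $-\frac{1}{2}\left|(\lambda I-G_1)\mathbf{S}\right|^2$ in the exponent, i.e.\ a coupling \emph{quadratic} in $G_1$. Averaging a quadratic-in-$G_1$ exponent does not "produce quartic terms in the super-vectors" to be removed by a Hubbard--Stratonovich transformation; it produces a shifted covariance, hence non-polynomial factors like $(1+|\mathbf{S}|^2)^{-(N-1)/2}$ and quartic fermionic terms with $\mathbf{S}$-dependent rational coefficients, after which the super-vectors can no longer be integrated out Gaussianly and no fixed-size supermatrix integral emerges in the way you assert. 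The paper circumvents precisely this by linearizing first: numerator and denominator are written as (half-integer powers of) determinants of the block matrix
\begin{equation*}
\begin{pmatrix} \sqrt{2p}\,I & i(\lambda I-G_1)\\ i(\lambda I-G_1^T) & \sqrt{2p}\,I\end{pmatrix},
\end{equation*}
so the denominator costs \emph{two} real vectors $\mathbf{S}_1,\mathbf{S}_2$ but couples to $G_1$ only linearly; the identity $\left\langle e^{-\Tr(GA+G^TB)}\right\rangle=e^{\frac{1}{2}\Tr(A^TA+B^TB+2AB)}$ then applies, a single complex Hubbard--Stratonovich variable $q$ decouples the unique quartic fermionic term, and the residual bosonic quartic terms are handled not by Hubbard--Stratonovich at all but by trading $(\mathbf{S}_1,\mathbf{S}_2)$ for their $2\times 2$ Gram matrix $\hat{Q}\ge 0$ with Jacobian $\propto(\det\hat{Q})^{(N-3)/2}$, after which all integrals are elementary. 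Finally, even granting the method, your proposal stops before the part that actually yields the theorem: the explicit evaluation giving the truncated-exponential sum, and the determination of the constants (the paper fixes them by the $p\to\infty$ asymptotics of the determinant ratio and by matching the $p\to 0$ limit to the Edelman--Kostlan--Shub density), which together constitute the bulk of the paper's proof.
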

In what follows we will frequently omit the index $\lambda$ in eigenvectors to lighten the notations, simply writing $\mathbf{x}^T_{L}$ or $\mathbf{x}_{R}$.

\begin{remark}
The above expression is a generalization of the exact mean density of purely real eigenvalues $ \rho^{(r)}_N(\lambda)$ for real Ginibre matrices of size $N$ explicit expression for which is known due to Edelman, Kostlan and Schub \cite{EKS}, see also \cite{Ed:97}:
 \begin{equation}\label{Edelman}
\rho^{(r)}_N(\lambda)=\frac{1}{\sqrt{2\pi}(N-2)!}\left[\Gamma\left(N-1,\lambda^2\right)+|\lambda|^{N-1}e^{-\frac{\lambda^2}{2}}\int_0^{|\lambda|} e^{-\frac{u^2}{2}} u^{N-2}\,du\right]\,.
\end{equation}
where
\begin{equation}\label{Gamma}
\Gamma\left(N,a\right)=(N-1)!e^{-a}\sum_{k=0}^{N-1} a^k/k!=\int_a^{\infty}e^{-u}u^{N-1}\,du\,
\end{equation}
is the incomplete $\Gamma$-function.
The expression (\ref{Edelman}) can be most easily recovered from (\ref{Mainpositive}) as $\rho^{(r)}_N(\lambda)=\int_{\mathbb{R}_{+}} \mathcal{P}(t,\lambda)\,dt$ after
rewriting (\ref{Mainpositive}) in an equivalent form as
 \begin{equation}\label{Main}
\mathcal{P}^{(r)}(t,\lambda)=\frac{1}{2\sqrt{2\pi}} e^{\frac{\lambda^2}{2} \frac{1}{1+t}}\frac{1}{t(1+t)}\left(\frac{t}{1+t}\right)^{\frac{N-1}{2}}\frac{1}{(N-2)!}
\left[\Gamma\left(N,\lambda^2\right)-\lambda^2\frac{t}{1+t}\Gamma\left(N-1,\lambda^2\right)\right]
\end{equation}
\[
=\frac{1}{2\sqrt{2\pi}} e^{\frac{\lambda^2}{2} \frac{1}{1+t}}\frac{1}{t(1+t)}\left(\frac{t}{1+t}\right)^{\frac{N-1}{2}}
\left[\frac{e^{-\lambda^2}\lambda^{2(N-1)}}{(N-2)!}+\frac{1}{(N-2)!}\Gamma\left(N-1,\lambda^2\right)\left((N-1)-\lambda^2\frac{t}{1+t}\right)\right]
\]
and then introducing $u=|\lambda|\sqrt{\frac{t}{1+t}}$
as the integration variable, cf. (\ref{EKS}) below.
  \end{remark}
 The expressions (\ref{Mainpositive}) or (\ref{Main}) show that the random 'overlap' variable $t$ is maximally heavy-tailed, so that all positive integer moments $\mathbb{E}[t^\mu], \,\mu\ge 1$ for a fixed $\lambda$ are divergent due to the 'fat tail' $\mathcal{P}(t,\lambda)\sim t^{-2}$ as $ t\to \infty$, and only the normalization $\mu=0$ is finite and yields the density (\ref{Edelman}).

Being exact, the expression (\ref{Mainpositive}) can be further analyzed in interesting scaling limits as $N\to \infty$. In fact, we find the form (\ref{Main}) most suitable for such an analysis. In particular, by rescaling $\lambda=\sqrt{N} x, \,\, t=N s$ (which is standard to call the {\it bulk scaling} limit), then considering $x,s$ as fixed when $N\to \infty$ and exploiting the appropriate asymptotic behaviour of the incomplete $\Gamma$-function:
\begin{equation}\label{Gammalim}
\lim_{N\to\infty}\frac{\Gamma\left(N-1,Na\right)}{(N-2)!}=1, \, \mbox{if}\,\, 0<a<1\,\, \mbox{and}\,\, 0 \,\, \mbox{if}\,\, a>1
\end{equation}
one easily finds that $\lim_{N\to \infty} N\mathcal{P}^{(r)}(N s\,, \sqrt{N} x )=\mathcal{P}^{(r)}_{bulk}(s,x)$ where
\begin{equation}\label{Mainscalingbulk}
\mathcal{P}^{(r)}_{bulk}(s,x)=\frac{1}{2\sqrt{2\pi}} \frac{e^{-\frac{(1-x^2)}{2s}}}{s^2} (1-x^2), \quad |x|<1
\end{equation}
and $\mathcal{P}^{(r)}_{bulk}(s,x)=0$ otherwise. All positive integer moments $\mathbb{E}[s^\mu], \,\mu\ge 1$ are divergent as before, whereas for $\mu=0$ we have
\begin{equation}\label{Mainscalingbulk}
\rho^{(r)}_{bulk}(x)=\int_{\mathbb{R}_{+}}\mathcal{P}^{(r)}_{bulk}(s,x)\,ds=\frac{1}{\sqrt{2\pi}}, \quad |x|<1\quad
\end{equation}
and $\rho^{(r)}_{bulk}(x)=0$ otherwise, in full agreement with $\rho^{(r)}_{bulk}(x)$ being the limiting mean density of real eigenvalues within the bulk of the spectrum of the real Ginibre ensemble, which is known to be uniform inside its support.

Another natural {\it edge scaling} limit arises in the vicinity of the edge of the support of limiting spectral measure for real eigenvalues,
that is for $\lambda=\sqrt{N}+\delta$, with $\delta<\infty$ being fixed. It is easy to understand that the variable $t$ needs to be rescaled in this
regime as $t=\sqrt{N}\sigma$, keeping $\sigma$ fixed. A straightforward calculation using the well-known asymptotics
\begin{equation}\label{Gammaedgelim}
\lim_{N\to\infty}\frac{\Gamma\left(N-1,N\left(1+2\delta N^{-1/2}\right)\right)}{(N-2)!}=\frac{1}{\sqrt{2\pi}}\int_{2\delta}^{\infty}e^{-\frac{v^2}{2}}\,dv
\end{equation}
then yields    $\lim_{N\to \infty} \sqrt{N}\mathcal{P}(\sqrt{N} \sigma, \, \, \sqrt{N} +\delta)=\mathcal{P}^{(r)}_{edge}(\sigma,\delta)$ where
\begin{equation}\label{Mainscalingbulk}
\mathcal{P}^{(r)}_{edge}(\sigma,\delta)=\frac{1}{2\sqrt{2\pi}} \frac{1}{\sigma^2}
e^{-\frac{1}{4\sigma^2}+\frac{\delta}{\sigma}}
 \left[\frac{1}{\sqrt{2\pi}}e^{-2\delta^2}+\left(\frac{1}{\sigma}-2\delta\right)\frac{1}{\sqrt{2\pi}}\int_{2\delta}^{\infty}e^{-\frac{v^2}{2}}\,dv\right].
\end{equation}
In particular, integrating the above over $\sigma$ gives
\begin{equation}\label{Mainscalingbulk}
\rho^{(r)}_{edge}(\delta)=\int_{\mathbb{R}_{+}}\mathcal{P}^{(r)}_{edge}(\sigma,\delta)\,d\sigma
=\frac{1}{2\sqrt{2\pi}}\left[1-\mbox{erf}(\delta\sqrt{2})+\frac{1}{\sqrt{2}}e^{-\delta^2}\left(1+\mbox{erf}(\delta)\right)\right]\,,
\end{equation}
where  $\mbox{erf}(\delta)=\frac{2}{\sqrt{\pi}}\int_0^{\delta}e^{-u^2}\,du$. This expression is
in full agreement with one for the limiting mean density of real eigenvalues at the edge of the spectrum of the real Ginibre ensemble, see
\cite{ForNag2008}.
\subsection{Complex Ginibre ensemble}
For the case of complex Ginibre ensemble $Gin_2$ the corresponding joint probability density $\mathcal{P}^{(c)}(t,z)$ of the non-orthogonality variable $t={\cal O}_{z}-1$ and the associated complex eigenvalue $z$  can be found in explicit form for finite $N$ as well, but turns out to be given by a much more cumbersome expression in comparison with the real Ginibre case.
\begin{theorem}\label{Theorem2}
  Consider the complex Ginibre ensemble of $N\times N$ square matrix $G \in \mathcal{M}_N\left(\C\right)$ with independent
identically distributed complex Gaussian matrix elements
	\begin{equation}\label{Ginelements}
		G_{j,k} = g^{\left(1\right)}_{j,k}+
			i\cdot g^{\left(2\right)}_{j,k},
			\quad
			\mbox{with i.i.d.}
			\,\,
			g^{\left(\cdot\right)}_{j,k}
			\sim \mathcal{N}\left(0,1/2\right),
	\end{equation}
 Let $z$ be a complex eigenvalue of $G$. Then the joint probability density $\mathcal{P}^{(c)}(t,z)$ of the self-overlap non-orthogonality variable $t={\cal O}_{z}-1$ and the associated complex eigenvalue $z$  for $N\ge 2$ is given by
\begin{equation}\label{MainComp}
\mathcal{P}^{(c)}(t,z)=\frac{1}{\pi (N-1)!(N-2)!}\,\frac{e^{|z|^2\frac{1}{1+t}}}{(1+t)^3}
\left(\frac{t}{1+t}\right)^{N-2}\left(D_1^{(N)}+|z|^2\frac{D_2^{(N)}}{1+t}+|z|^4\frac{d_1^{(N)}}{(1+t)^2}\right)
\end{equation}
where  $D^{(N)}_1$ and $D^{(N)}_2$ are defined as
\begin{equation}\label{LaplacecompFINALcoeffC}
D_1^{(N)}=|z|^4 (N-1)(N-2)d_1^{(N-1)}+\left[(N-1)N-2|z|^2(N +  |z|^2)\right]d_1^{(N)}
\end{equation}
\[
-|z|^2(N-2)(N-|z|^2)d_2^{(N-1)}+|z|^2d_2^{(N)}\,,
\]
\begin{equation}\label{LaplacecompFINALcoeffD}
\quad D_2^{(N)}=2N\,d_1^{(N)}-|z|^2(N-2)d_2^{(N-1)}
\end{equation}
 and $d_1^{(N)}$, $d_2^{(N)}$ are functions of $|z|^2$ explicitly defined via the relations
to the incomplete $\Gamma-$function as
\begin{equation}\label{LaplacecompFINALcoeffA}
d_1^{(N)}=\Gamma\left(N-1,|z|^2\right)\Gamma\left(N+1,|z|^2\right)-\Gamma\left(N,|z|^2\right)\Gamma\left(N,|z|^2\right)
\end{equation}
\begin{equation}\label{LaplacecompFINALcoeffB}
d_2^{(N)}=\Gamma\left(N-1,|z|^2\right)\Gamma\left(N+2,|z|^2\right)-\Gamma\left(N,|z|^2\right)\Gamma\left(N+1,|z|^2\right)
\end{equation}
Note that $d^{(N)}_{1},d^{(N)}_{2}$ and $D^{(N)}_{1},D^{(N)}_{2}$ depend only on $|z|^2$ but not on the variable $t$.
\end{theorem}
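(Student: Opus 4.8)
The plan is to combine the partial Schur decomposition, which isolates one complex eigenvalue and turns the self-overlap into an explicit quadratic form, with a supersymmetric evaluation of the resulting average of a ratio of characteristic polynomials.

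\textbf{Geometric reduction via partial Schur.} By exchangeability of the eigenvalues one has $\mathcal{P}^{(c)}(t,z)=N\left\langle \delta(t-t_1)\,\delta^{(2)}(z-\lambda_1)\right\rangle$. I would realize $G$ through a partial Schur decomposition
\[
G=U\begin{pmatrix} z & \mathbf{b}^* \\ 0 & \hat{G}\end{pmatrix}U^*,\qquad U\ \text{Haar-unitary},\ \ \hat{G}\in\C^{(N-1)\times(N-1)},\ \ \mathbf{b}\in\C^{N-1},
\]
with $z=\lambda_1$ in the $(1,1)$ slot. Computing the left and right eigenvectors of this eigenvalue directly (the right one is $\mathbf{e}_1$, the left one has tail proportional to $\mathbf{b}^*(z-\hat{G})^{-1}$) gives the closed form $t_1=\mathbf{b}^* M^{-1}\mathbf{b}$ with $M=(z-\hat{G})^*(z-\hat{G})$. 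The Jacobian of the decomposition produces $\prod_{j\ge2}|z-\lambda_j|^2=|\det(z-\hat{G})|^2=\det M$, and under the $Gin_2$ weight $\hat{G}$ becomes an independent $(N-1)\times(N-1)$ complex Ginibre matrix, $\mathbf{b}$ an independent standard complex Gaussian vector, and $z$ carries the weight $e^{-|z|^2}$. I would record this as a lemma (standard Schur measure for $Gin_2$) and fix its normalization by checking it against the known density $\rho^{(c)}_N(z)=\tfrac{1}{\pi(N-1)!}\Gamma(N,|z|^2)$.

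\textbf{Integrating out $\mathbf{b}$ by a Laplace transform.} I would pass to $\tilde{\mathcal{P}}^{(c)}(s,z)=\int_0^\infty e^{-st}\mathcal{P}^{(c)}(t,z)\,dt$. The Gaussian $\mathbf{b}$-integral gives $\left\langle e^{-s\,\mathbf{b}^*M^{-1}\mathbf{b}}\right\rangle_{\mathbf{b}}=\det M/\det(M+s)$, which combined with the Jacobian $\det M$ yields
\[
\tilde{\mathcal{P}}^{(c)}(s,z)=c_N\,e^{-|z|^2}\left\langle \frac{(\det M)^2}{\det(M+s)}\right\rangle_{\hat{G}},\qquad M=(z-\hat{G})^*(z-\hat{G}),
\]
an average over $\hat{G}\sim Gin_{N-1}$. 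Here $(\det M)^2=|\det(z-\hat{G})|^4$ is a product of two holomorphic and two antiholomorphic characteristic polynomials, while $1/\det(M+s)$ couples both sectors.

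\textbf{Supersymmetric evaluation (the crux).} I would represent the four numerator determinants by Grassmann (fermionic) Gaussian integrals and the denominator by a commuting (bosonic) Gaussian integral, on the doubled/Hermitized space that the non-Hermiticity of $\hat{G}$ forces. The average over $\hat{G}$ is then Gaussian and can be performed, producing quartic couplings of the auxiliary fields which I decouple by a Hubbard–Stratonovich transformation; this trades the $N$-dependent matrix average for an integral over a fixed-size supermatrix $Q$, with $N-1$ entering only through the power $(\sdet Q)^{-(N-1)}$. Carrying out this superintegral and the remaining ordinary integrals in closed form is the main obstacle, both because the \emph{squared} numerator enlarges the fermionic sector and because tracking the full $|z|$-dependence requires careful bookkeeping.

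\textbf{Assembling the answer.} The superintegral should express $\tilde{\mathcal{P}}^{(c)}(s,z)$ through the averages $e^{-|z|^2}\langle|\det(z-\hat{G}_m)|^2\rangle=\Gamma(m+1,|z|^2)$ at the shifted sizes $m=N-2,N-1,N,N+1$; the Hankel-type combinations $d_1^{(N)},d_2^{(N)}$ then arise as $2\times2$ minors of this $\Gamma$-sequence, the signature of the two-flavour structure of the auxiliary sectors, and $D_1^{(N)},D_2^{(N)}$ collect the $|z|^2$-weighted regroupings. Finally I would invert the Laplace transform in $s$: the variable conjugate to $1/(1+t)$ reproduces the prefactor $e^{|z|^2/(1+t)}(t/(1+t))^{N-2}/(1+t)^3$ and organizes the polynomial coefficients exactly into \eqref{MainComp}. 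As a consistency check, setting $s=0$ gives $\langle\det M\rangle=e^{|z|^2}\Gamma(N,|z|^2)$, so that $\int_0^\infty\mathcal{P}^{(c)}(t,z)\,dt=\rho^{(c)}_N(z)$, which both validates $c_N$ and confirms the overall structure.
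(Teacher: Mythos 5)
Your proposal follows essentially the same route as the paper: the partial Schur decomposition giving $t=\mathbf{b}^*M^{-1}\mathbf{b}$ with $M=(z-\hat G)^*(z-\hat G)$, the Laplace transform in $t$ whose Gaussian $\mathbf{b}$-integral combines with the Schur Jacobian into the average $\left\langle (\det M)^2/\det(M+s)\right\rangle$ over an $(N-1)\times(N-1)$ Ginibre matrix (the paper's $\mathcal{D}^{(2)}_{N-1,2}(z,p)$), a supersymmetric evaluation of that average, and trivial Laplace inversion with the normalization fixed a posteriori against the known mean density --- all of which you state correctly, including the consistency check $\langle\det M\rangle=e^{|z|^2}\Gamma(N,|z|^2)$ at $s=0$. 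The only reservation is that your central step remains a sketch: the paper carries it out with two sets of four anticommuting vectors for the squared numerator, a $2\times 2$ fermionic Hubbard--Stratonovich matrix $\hat Q_F$, and a change of the bosonic vectors to a positive semidefinite $2\times 2$ Hermitian matrix $\hat Q$ (a variant avoiding the full supermatrix/$\sdet$ formulation you invoke), finishing with Mathematica-assisted simplification to obtain the explicit coefficients $d^{(N)}_{1,2}$, $D^{(N)}_{1,2}$ --- though the structure you anticipate, incomplete-$\Gamma$ Hankel-minor combinations arising from the two-flavour fermionic sector, is exactly what that computation produces.
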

\begin{remark}
The expression (\ref{MainComp}) is a generalization of the well-known mean density of eigenvalues of the complex Ginibre ensemble, see e.g. \cite{KS2011} :
 \begin{equation}\label{denscomp}
\rho_N^{(c)}(z)=\frac{1}{\pi} e^{-|z|^2}\sum_{n=0}^{N-1}\frac{|z|^{2n}}{n!}
\end{equation}
The latter can indeed be recovered as $\rho^{(c)}_N(z)=\int_{\mathbb{R}_{+}} \mathcal{P}^{(c)}(t,z)\,dt$ as is possible to check employing Mathematica.
  \end{remark}

The joint density $\mathcal{P}^{(c)}(t,z)$ at fixed $z$ decays at large arguments $t\gg 1$ as  $\mathcal{P}^{(c)}(t,z)\sim t^{-3}$ as was already anticipated by Mehlig and Chalker on the basis of $N=2$ example and informal eigenvalue repulsion arguments \cite{MehligChalker2000}. In contrast to the real Ginibre case such density does have the finite first moment.
Only for the  special value $z=0$ the above joint density significantly simplifies and is given by
 \begin{equation}\label{MainCompzero}
\mathcal{P}^{(c)}(t,0)=\frac{N(N-1)}{\pi} \frac{t^{N-2}}{(1+t)^{N+1}}, \quad N\ge 2
\end{equation}

Despite the relative complexity of (\ref{MainComp}), its bulk rescaling limit $ z=\sqrt{N} w, \, t=N s$,  with $w,s$ being fixed when $N\to \infty$, can be straightforwardly extracted. To this end, it is convenient to use the following integral representations for $d_1^{(N)}$ and $d_2^{(N)}$ (following from
combining (\ref{LaplacecompFINALcoeffA}) and  (\ref{LaplacecompFINALcoeffB}) with  (\ref{Gamma}) and appropriate rescaling):
\begin{equation}\label{d1int}
d_1^{(N)}=\frac{1}{2}N^{2N}\int_{|z|^2/N}^{\infty} du_1\int_{|z|^2/N}^{\infty} du_2 \, (u_1-u_2)^2\, \frac{1}{(u_1u_2)^2}\,e^{-N(u_1-\ln{u_1}+u_2-\ln{u_2})}
\end{equation}
and
\begin{equation}\label{d1int}
d_2^{(N)}=\frac{1}{2}N^{2N+1}\int_{|z|^2/N}^{\infty} du_1\int_{|z|^2/N}^{\infty} du_2 \, (u_1-u_2)^2 \, \frac{(u_1+u_2)}{(u_1u_2)^2}\,e^{-N(u_1-\ln{u_1}+u_2-\ln{u_2})}
\end{equation}
which for large $N\gg 1$ are easily amenable to the standard asymptotic analysis by the Laplace method. In this way we find in the 'bulk' scaling limit
  for $|w|^2<1$  the following $w-$ independent asymptotic behaviour:
\begin{equation}\label{asycomp1}
d_1^{(N)}\sim 2\pi N^{2N-2}e^{-2N}\sim \frac{1}{N^3}(N!)^2,  \quad d_2^{(N)}\sim 2N\,d_1^{(N)}
\end{equation}
where $a_N\sim b_N$ means $\lim_{N\to \infty} a_N/b_N=1$. This implies $ d_1^{(N-1)}\sim \frac{1}{N^2}\,d_1^{(N)}, \quad d_2^{(N-1)}\sim \frac{2}{N}\,d_1^{(N)}$ and then via (\ref{LaplacecompFINALcoeffC}) and (\ref{LaplacecompFINALcoeffD}) we further find
\begin{equation}\label{asycomp2}
D_1^{(N)}\sim  \frac{1}{N} \, (N!)^2 (1-|w|^2)^2, \quad  D_2^{(N)}\sim \frac{2}{N^2} (N!)^2 (1-|w|^2)
\end{equation}
We then see that in the 'bulk' limit the first term $D^{(N)}_1$ in the brackets of (\ref{MainComp}) is dominant in comparison with the other two, and taking the limit
 $\lim_{N\to \infty} N\mathcal{P}^{(c)}(t=N s,z=\sqrt{N} w)=\mathcal{P}^{(c)}_{bulk}(s,w)$ one finds
\begin{equation}\label{MainscalingbulkComp}
\mathcal{P}^{(c)}_{bulk}(s,w)= \frac{(1-|w|^2)^2}{\pi s^3}\,e^{-\frac{1-|w|^2}{s}}, \quad |w|<1
\end{equation}
and zero otherwise. This expression agrees with results obtained by P. Bourgade and G. Dubach \cite{BourgadeDubach} in a different approach to the problem. Its first moment is precisely $\frac{1}{\pi}(1-|w|^2)$ inside the bulk of the spectrum, in agreement with the expression by Chalker and Mehlig.

Finally, one also can extract the corresponding edge asymptotics by replacing $|z| =\sqrt{N}+\delta$ and $t=\sqrt{N}\sigma$ and performing the limit $N\to \infty$. With a help of the Mathematica package\footnote{The author is grateful to J. Grela for his help with utilizing Wolfram Mathematica for that purpose.} one then finds that
\[\lim_{N\to\infty} N^{3/2}\mathcal{P}^{(c)}(\sqrt{N}\sigma, \sqrt{N}+\delta)=\mathcal{P}^{(c)}_{edge}(\sigma,\delta)\] where

\begin{equation}\label{edgecomp}
\mathcal{P}^{(c)}_{edge}(\sigma,\delta)=\frac{1}{2\pi\sigma^5}\,e^{-\frac{\Delta^2}{2\sigma^2}}
\left\{\frac{e^{-2\delta^2}}{\pi}\left(2\sigma^2-\Delta\right)-\frac{1}{\sqrt{2\pi}}\left(4\delta\sigma^2-\Delta(2\delta+\sigma)\right)
\erfc\left({\sqrt{2}\delta}\right)\right.
\end{equation}
\[
\left. +\frac{e^{2\delta^2}}{2}\left(\Delta^2-\sigma^2\right)\erfc^2\left({\sqrt{2}\delta}\right)\right\}
\]
and we denoted $\Delta=1-2\sigma\delta$. In particular, one can check that integrating over $\sigma$ yields a well-known formula for the
mean edge density of complex eigenvalues:
\[
\rho_{edge}(\delta)=\int_{\mathbb{R}_{+}}\mathcal{P}^{(c)}_{edge}(\sigma,\delta)\,d\sigma=\frac{1}{2\pi}\erfc\left({\sqrt{2}\delta}\right)
\]
One also can see that the 'bulk'(\ref{MainscalingbulkComp}) and 'edge' (\ref{edgecomp}) asymptotics match by replacing in the latter $\delta=\frac{1}{2}\sqrt{N}(|w|^2-1)$ and $\sigma=\sqrt{N}s$ and letting $N\to \infty$ for fixed $|w|<1$ and $s$, checking that
\[\lim_{N\to\infty} N^{1/2}\mathcal{P}^{(c)}_{edge}\left(\sigma=\sqrt{N}s, \delta=\frac{1}{2}\sqrt{N}(|w|^2-1)\right)=\mathcal{P}^{(c)}_{bulk}(s,w)\]

\subsection{Discussion of the method and open problems.}
Our approach consists of two steps. In the first step we show that the partial Schur decomposition of Ginibre matrices employed in works
\cite{EKS} and \cite{Ed:97} allows one to represent the JPD's $\mathcal{P}^{(r)}(t,\lambda)$ and $\mathcal{P}^{(c)}(t,z)$,  Laplace-transformed with respect to
the variable $t$, in terms of the following object:
\begin{equation}\label{DetGenbeta}
\mathcal{D}^{(L)}_{N,\beta}(z,p)=\left\langle \frac{\det^{\beta L/2}\left(z\,I_N-G\right)\left(\overline{z}\,I_N-G^*\right)}{\det^{\beta/2}
\left[\frac{2}{\beta} p\,I_N+\left(z\,I_N-G\right)\left(\overline{z}\,I_N-G^*\right)\right]} \right\rangle_{Gin_\beta}
\end{equation}
where $L=0,1,2,\ldots$ is an integer, $p\ge 0$, the parameter $\beta=2$ stands for the complex Ginibre ensemble and  $\beta=1$ for the real Ginibre one (in the latter case $z=\lambda$ is real), and $I_N$ standing for the $N\times N$ identity matrix. In fact, the goals of the present paper require evaluation of (\ref{DetGenbeta})
only for $L=2$, but it is interesting to consider a more general problem, see below.

 Note that for $\beta=2$ we deal here with expectation values involving integer powers of characteristic polynomials for non-selfadjoint matrices $G$ in both numerator and denominator. Studying similar objects for self-adjoint random matrices has a long history, see e.g. \cite{FyoStra2003, BorStra2006} for a background discussion and further references. At the same time, for $\beta=1$ a half-integer power in the denominator is involved.
 To deal with the latter challenge we employ one of very few techniques available in that case, the so-called supersymmetry approach, see \cite{ZirnSUSY,GuhrSUSY} for concise introductions and also \cite{FyoNock2015,Wirtzetal2015} for earlier computations involving half-integer powers of characteristic polynomials for real symmetric Gaussian random matrices. We find it convenient to use a (rigorous) variant of the approach proposed originally in \cite{YF2002} and the final expression for $\mathcal{D}_{N,1}^{(2)}(\lambda,p)$ is given in (\ref{DNequal}) or (\ref{DNFin}). As a by-product of the same calculation one also finds for $L=0$:
\begin{equation}\label{DNFink0beta1}
\mathcal{D}_{N,2}^{(L=0)}(\lambda,p) = \frac{1}{2^{N/2}\Gamma(N/2)} \int_{\mathbb{R}_+}\frac{dt}{t}e^{-pt} e^{-\frac{\lambda^2}{2\left(1+\frac{1}{t}\right)}}\left(\frac{t}{1+t}\right)^{N/2}
\end{equation}
 The same procedure works, with due modifications, for the complex case $\beta=2$, with the computational challenge now coming not from the half-integer power in the denominator, but from the higher integer power of the determinant in the numerator.  The actual calculation is very straighforward for $L=0$, becomes slightly more involved for $L=1$, and for the case of actual interest $L=2$ produces much more cumbersome expressions, see Sec. 4.1.2 for the derivation.  In the end we have to resort  to symbolic computer manipulations to deal with the ensuing integrals.  Here we simply quote the results for $L=0$ and $L=1$ for the sake of completeness:
\begin{equation}\label{DNFink0beta2}
\mathcal{D}_{N,2}^{(L=0)}(z,p) = \frac{1}{(N-1)!} \int_{\mathbb{R}_+}\frac{dt}{t}e^{-pt} e^{-\frac{|z|^2}{\left(1+\frac{1}{t}\right)}}\left(\frac{t}{1+t}\right)^{N}
\end{equation}
and \footnote {In fact the $L=1$ case was considered by a different variant of the supersymmetry approach in \cite{GrelaGuhr}, though the result was not presented in the form \ref{DNFink1beta2}.}
\begin{equation}\label{DNFink1beta2}
\mathcal{D}_{N,2}^{(L=1)}(z,p) = \frac{1}{(N-1)!} e^{|z|^2}\int_{\mathbb{R}_+}\frac{dt}{t(1+t)}e^{-pt} e^{-\frac{|z|^2}{\left(1+\frac{1}{t}\right)}}\left(\frac{t}{1+t}\right)^{N}\left[\Gamma(N+1,|z|^2)-\frac{t}{1+t}|z|^2 \Gamma(N,|z|^2)\right]
\end{equation}
In particular, by a direct integration one can check that  $\mathcal{D}_{N}^{(L=1)}(z,p=0)=1$, in agreement with the definition  (\ref{DetGenbeta}).

Given the complexity of arising expressions for $\beta=2$, it is worth to give a different perspective on the problem. To that end we note that by introducing the matrices $W=z-G$ our main object for $\beta=2$ case, namely $ \mathcal{D}_{N,2}^{(L)}(z,p)$, can be formally rewritten as
\begin{equation} \label{invcharpol}
 \mathcal{D}_{N,2}^{(L)}(z,p)= \int \, \frac{1}{\det{\left[p I_N +WW^*\right]}} \, P_{L,z}\left(W,W^*\right) dW\,dW^* ,
\end{equation}
with integration going over complex $N\times N$ matrices $W$ with the weight function $P_{L,z}\left(W,W^*\right)$ depending on an integer parameter $L=0,1,2,\ldots$, and on the complex parameter $z$:
 \begin{equation}\label{Ensemble}
		P_{L,z}\left(W,W^*\right) =e^{-N|z|^2} \left(\frac{1}{2\pi}\right)^{N^2}
		e^{-\mbox{\small Tr} \left( W W^*\right)+z\mbox{\small Tr}W^*+\overline{z}\mbox{\small Tr}W}\, \left[\det{WW^*}\right]^L
	\end{equation}
 The right-hand side of (\ref{invcharpol}) can be obviously interpreted as the mean inverse characteristic polynomial of the matrix $W^*W$ averaged over this 'ensemble' \footnote{Formally the weight defined in (\ref{Ensemble}) is not a probability measure for any $L\ne 0$ as it is not normalized to unity, but
  we disregard such difference for our goals.} closely related (though not identical for $L>0$) to a limiting case of versions of the chiral ensemble with a 'source' considered in \cite{QCDfinT} and \cite{DesFor2008}. Namely, let us consider a  more general version of (\ref{Ensemble}):
\begin{equation}\label{EnsembleGen}
		P^{(A)}_{L,z}\left(W,W^*\right) = C_{N,L} e^{-\sum_{i=1}^N a_i} \left(\frac{1}{2\pi}\right)^{N^2}
		e^{-\mbox{\small Tr} \left( W W^*\right)+\mbox{\small Tr}\left(W^*A\right)+\mbox{\small Tr}\left(A^*W\right)}\, \left[\det{WW^*}\right]^L
	\end{equation}
where the 'source' matrix $A$ is a fixed $N\times N$ complex matrix with the singular values (i.e. eigenvalues of $A^*A$) being (in general, distinct) non-negative real numbers $a_1,a_2,\ldots,a_N$. Obviously our previous choice corresponded to all $a_i$ equal to $|z|^2$. Note that the $n-$point correlation functions of eigenvalue densities for such type of a chiral ensemble (with a Hermitean source $A$) were derived in \cite{QCDfinT}, but their knowledge is not sufficient for our purposes. Some information for the mean inverse characteristic polynomial for the chiral ensemble with a 'source' similar to (\ref{EnsembleGen}) was given in the framework of the method of multiple orthogonal polynomials in \cite{DesFor2008}. In a separate paper \cite{FyoGrelStrah2017} we are providing the full analysis of the problem for $\beta=2$ for any integer positive $L$ and $N$ by deriving  the following
representation (see Proposition 3.9 in \cite{FyoGrelStrah2017} ):
\begin{equation}\label{Gen2mynorm}
 \mathcal{D}^{(L)}_{N,\beta=2}(z,p)=\int_0^{\infty} \frac{dt}{(1+t)^{L+1}} e^{-tp}\, \mathcal{G}^{(L)}_N\left(|z|^2,\frac{t}{1+t}\right)
\end{equation}
where we defined the following function of $\rho=|z|^2$ and $\tau=t/(1+t)$:
\begin{equation}\label{G1amynorm}
\mathcal{G}^{(L)}_N(\rho,\tau)=\frac{(-1)^N L!}{(N-1)!}\int_0^{\infty}dt_1\ldots \int_0^{\infty}dt_L \Delta^2(t_1,\ldots,t_L) \prod_{k=1}^L (t_k+\rho)^N\,e^{-t_k}
\end{equation}
\[
\times\frac{d^{N-1}}{d^{N-1}\rho}\left[ e^{-\rho\,\tau}\,{\bf L}_L\left(-\rho(1-\tau)\right)\frac{1}{\prod_{k=1}^L (t_k+\rho)}\right]
\]
where ${\bf L}_k(x)$ are Laguerre polynomials. The equivalence with (\ref{DNFink0beta2}) - (\ref{DNFink1beta2}) for $L=0,1$ can be straightforwardly verified (see the Appendix A of \cite{FyoGrelStrah2017}).  One can further perform the asymptotic analysis of (\ref{Gen2mynorm}) - (\ref{G1amynorm}) for $N\gg 1$ and extract the bulk scaling asymptotics relevant for the present paper in a more transparent and systematic way than is provided
by the supersymmetry approach in $\beta=2$ case. Nevertheless, supersymmetric treatment has its own merits: the method is robust and is expected to be generalizable to more general ensembles of non-selfadjoint random matrices lacking the full invariance of the Ginibre ensembles.

 The approach suggested in the present paper can be certainly adjusted for addressing overlaps of left/right eigenvectors corresponding to complex eigenvalues of real Ginibre ensemble, although in this way one encounters a few challenging technical problem not yet fully resolved. One can also envisage extensions addressing overlaps of two different eigenvectors, as well as posing similar questions for other types of non-Hermitian matrices, including those with quaternion structure for $\beta=4$, those relevant in the theory of chaotic scattering and those relevant in the Quantum Chromodynamics context\cite{Osborn}. We hope to be able to answer some of these questions in future publications.

\section{Proof of Theorem \ref{Theorem1}}

\subsection{Eigenvectors of real matrices via partial Schur decomposition}

Let $\lambda$ be a real eigenvalue of a matrix $G^{(N)}$ with real entries, and denote the associated real left and right eigenvectors as $\mathbf{x}^T_{L}$ and  $\mathbf{x}_{R}$.  Then, as is well-known, see e.g. \cite{EKS,Ed:97},  it is always possible to  represent the matrix $G^{(N)}$ as
\begin{equation}\label{reductreal}
G^{(N)}=P
\begin{pmatrix}
  \lambda  & {\bf w}^T \\
  0 & G^{(N-1)}
\end{pmatrix} P^T
\end{equation}
for some real $(N-1)$- component column vector $\mathbf{w}$ and a real matrix $ G^{(N-1)}$ of size  $(N-1)\times (N-1)$, whereas the matrix $P$ is real symmetric and orthogonal: $P^T=P$ and $P^2=I_N$. As the left/right eigenvectors of the matrix $\tilde{G}^{(N)}=\begin{pmatrix}
  \lambda  & {\bf w}^T \\
  0 & G^{(N-1)}
\end{pmatrix}$ are given in terms of $\mathbf{x}^T_{L}$ and $\mathbf{x}_{R}$ by $\mathbf{\tilde{x}}^T_L=\mathbf{x}^T_LP$ and   $\mathbf{\tilde{x}}_R=P\mathbf{x}_R$, we see that the corresponding self-overlaps remain invariant: ${\cal O}_{\lambda}={\cal \tilde{O}}_{\lambda}$. Hence one can use $\mathbf{\tilde{x}}^T_L$ and   $\mathbf{\tilde{x}}_R$ for our calculation. Moreover, it is immediately clear from the form of $\tilde{G}^{(N)}$ that $\mathbf{\tilde{x}}_R=\mathbf{e}_1:=(1,0,\ldots,0)^T$, whereas $\mathbf{\tilde{x}}^T_L=(1,\mathbf{b}_{\lambda}^T)$, where $\mathbf{b}_{\lambda}^T$ is an $(N-1)-$component real vector. This implies ${\cal \tilde{O}}_{\lambda}=1+\mathbf{b}_{\lambda}^T\mathbf{b}_{\lambda}$, reducing the study of the self-overlap to the statistics of the norm $\mathbf{b}_{\lambda}^T\mathbf{b}_{\lambda}$.
To this end we find that the vector $\mathbf{b}_{\lambda}^T$ can be readily expressed in terms of the resolvent of $\tilde{G}^{(N-1)}$ as
\begin{equation}\label{bvec}
\mathbf{w}^T+\mathbf{b}_{\lambda}^TG^{(N-1)}=\lambda\mathbf{b}_{\lambda}^T, \quad \Leftrightarrow \quad  \mathbf{b}_{\lambda}^T= \mathbf{w}^T\left(\lambda\, I_{N-1}-G^{(N-1)}\right)^{-1}
\end{equation}
implying
\begin{equation}\label{bvecnormreal}
 \mathbf{b}_{\lambda}^T\mathbf{b}_{\lambda}=\mathbf{w}^T\left[\left(\lambda\, I_{N-1}-G^{(N-1)}\right)^{T}\left(\lambda\, I_{N-1}-G^{(N-1)}\right)\right]^{-1}\,\mathbf{w}
\end{equation}
\subsection{ Partial Schur decomposition of the Real Ginibre Ensemble and overlap statistics. }

In this section we show how to reduce the calculation of the Laplace transform $\mathcal{L}_{\lambda}(p):=\int_0^{\infty}e^{-pt}\mathcal{P}^{(r)}(t,\lambda)\,dt$  of the JPD  $\mathcal{P}^{(r)}(t,\lambda)$  defined in (\ref{overlapdistr}) to evaluating the ensemble average for the ratio of certain determinants, see (\ref{Laplacereal}-\ref{Laplace1}).

   For the ensemble of $N\times N$ square matrices $G^{(N)}\equiv G \in \mathcal{M}_N\left(\R\right)$ with independent
identically distributed real matrix element $G_{j,k} \sim \mathcal{N}\left(0,1\right)$ we use the notation $Gin_1$ and denote by the angular brackets $\left\langle \cdots \right\rangle_{Gin_1}$
the expectation of any  function $F:\R^{N\times N} \to \C$ with respect to the associated probability distribution, though we will frequently omit the
corresponding subscript to lighten the formulas.
\begin{remark}
	In a similar way one defines the complex Ginibre ensemble $Gin_2$
		\begin{equation*}
			G_{j,k} = g^{\left(1\right)}_{j,k}+
			i\cdot g^{\left(2\right)}_{j,k},
			\quad
			\mbox{with i.i.d.}
			\,\,
			g^{\left(\cdot\right)}_{j,k}
			\sim \mathcal{N}\left(0,1/2\right),
		\end{equation*}
as well as the so-called quaternion $Gin_4$ ensemble which is however not considered in the present work.
\end{remark}
  Assigning the Dyson's index $\beta = 1,2,4$ \footnote{ In the literature one frequently uses the notation $GinOE$ emphasizing an orthogonal symmetry of the distribution, and correspondingly $GinUE$ and $GinSE$ for complex and quaternion real versions of Ginibre ensemble with $\beta=2$ and $\beta=4$, correspondingly.} one can write
for all three ensembles the Joint Probability Density (JPD) with respect to the flat Lebesgue measure  in
the form
	\begin{equation}\label{Ginibre}
		P_{\beta}\left(G\right) =
		\left(\frac{\beta}{2\pi}\right)^{\frac{\beta}{2} N^2}
		\exp{-\frac{\beta}{2}\Tr G G^*}.
	\end{equation}
where $G^*=\overline{G}^T$ stands for the Hermitian conjugation, and the bar for the complex conjugation.
In this section we will concentrate in detail on the real case $\beta=1$; similar treatment of  $\beta=2$ case will be briefly described in the last section.

We start with exploiting the following
\begin{proposition}[see Lemma 3.1, Lemma 3.2 and Theorem 3.1 in \cite{EKS}]\label{t:jpdfreal}

Assume that the JPD of $G$ is given by (\ref{Ginibre}) with $\beta=1$, and apply the transformation (\ref{reductreal}). Then the joint probability density
of elements of the matrix  $\tilde{G}$ is given by

\begin{equation}\label{volchange}
{\cal P}(\tilde{G})d\tilde{G}={\cal C}_{1,N}|\det{(\lambda I_{N-1}-G^{(N-1)})}|\,e^{-\frac{1}{2}\left(\lambda^2+\mathbf{w}^T\mathbf{w}+\Tr G^{(N-1)}G^{(N-1)T}\right)}\,d\lambda\,d\mathbf{w}\,dG^{(N-1)} \, .
\end{equation}
with some normalization constant ${\cal C}_{1,N}$.
\end{proposition}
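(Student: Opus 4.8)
The plan is to reconstruct the change-of-variables argument of Edelman--Kostlan--Schub underlying (\ref{volchange}); the only genuine work is a Jacobian computation. First I would make the parametrization in (\ref{reductreal}) fully explicit. A real eigenvalue $\lambda$ of $G$ comes with a unit right eigenvector $\mathbf{u}\in S^{N-1}$ satisfying $G\mathbf{u}=\lambda\mathbf{u}$, and I take $P=P(\mathbf{u})$ to be the Householder reflection exchanging $\mathbf{e}_1$ and $\mathbf{u}$, so that $P=P^T$, $P^2=I_N$, $P\mathbf{e}_1=\mathbf{u}$. Then $\tilde G:=PGP$ has first column $\lambda\mathbf{e}_1$ (indeed $\tilde G\mathbf{e}_1=PGP\mathbf{e}_1=PG\mathbf{u}=\lambda P\mathbf{u}=\lambda\mathbf{e}_1$), i.e. it is exactly the block form with data $\lambda,\mathbf{w},G^{(N-1)}$. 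This exhibits $G\mapsto(\lambda,\mathbf{u},\mathbf{w},G^{(N-1)})$ as a (generically) local diffeomorphism between $\R^{N\times N}$ and $\R\times S^{N-1}\times\R^{N-1}\times\R^{(N-1)\times(N-1)}$, with matching dimensions $1+(N-1)+(N-1)+(N-1)^2=N^2$.

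The Gaussian weight then transforms trivially: conjugation by the fixed orthogonal $P$ preserves the trace, so $\Tr GG^T=\Tr\tilde G\tilde G^T=\lambda^2+\mathbf{w}^T\mathbf{w}+\Tr G^{(N-1)}G^{(N-1)T}$, which is precisely the exponent appearing in (\ref{volchange}). Hence all the content sits in the Jacobian $|\det J|$ of the map above.

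To compute it I would differentiate $G=P\tilde G P$ and conjugate back by $P$. Writing $\Omega:=P\,\delta P$, which is antisymmetric since $P^2=I_N$ and which encodes the eigenvector variation $\delta\mathbf{u}$, one obtains the measure-preserving identity
\[ \widehat{\delta G} := P(\delta G)P = \delta\tilde G + [\Omega,\tilde G], \qquad \Omega = -\Omega^T . \]
Splitting into $1+(N-1)$ blocks with $\Omega=\begin{pmatrix}0&-\mathbf{g}^T\\ \mathbf{g}&\Omega'\end{pmatrix}$, the crucial observation is that the lower-left block of $\widehat{\delta G}$ equals $(\lambda I_{N-1}-G^{(N-1)})\mathbf{g}$ and depends on $\mathbf{g}$ alone: the Schur constraint kills the $(2,1)$ block of $\delta\tilde G$, and a short block multiplication shows $[\Omega,\tilde G]_{21}=(\lambda I_{N-1}-G^{(N-1)})\mathbf{g}$, with no contribution from $\Omega'$. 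Meanwhile $\delta\lambda$, $\delta\mathbf{w}$, $\delta G^{(N-1)}$ enter the $(1,1)$, $(1,2)$, $(2,2)$ blocks through identity maps plus corrections in $\mathbf{g}$. Ordering the coordinates as $(\mathbf{g};\delta\lambda,\delta\mathbf{w},\delta G^{(N-1)})$ against targets $((2,1);(1,1),(1,2),(2,2))$ makes $J$ block-triangular with diagonal blocks $(\lambda I_{N-1}-G^{(N-1)})$, $1$, $I_{N-1}$, $I_{(N-1)^2}$, whence $|\det J|=|\det(\lambda I_{N-1}-G^{(N-1)})|$. Since neither the weight nor this Jacobian depends on $\mathbf{u}$, integrating the auxiliary eigenvector direction over $S^{N-1}$ yields only a constant, absorbed into $\mathcal{C}_{1,N}$, giving (\ref{volchange}).

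The main obstacle is precisely this Jacobian bookkeeping. The delicate points are: confirming that $P(\mathbf{u})$ carries exactly the $N-1$ sphere degrees of freedom, so that $\Omega$ is a genuine function of $\delta\mathbf{u}$ with its redundant block $\Omega'$ determined by $\mathbf{g}$ and contributing only off the relevant diagonal; verifying that the reparametrization from the round measure $d\mathbf{u}$ to $d\mathbf{g}$ contributes a factor whose integral over $S^{N-1}$ is a $\lambda$-, $\mathbf{w}$-, $G^{(N-1)}$-independent constant (again absorbed into $\mathcal{C}_{1,N}$); and the measure-theoretic care forced by the fact that $G$ may carry several real eigenvalues, so that (\ref{volchange}) must be read as the density attached to one extracted real eigenvalue. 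These are exactly the statements packaged as Lemmas 3.1--3.2 and Theorem 3.1 of \cite{EKS}, which I would invoke to render the argument rigorous.
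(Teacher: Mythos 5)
Your proposal is correct and is essentially the argument the paper itself relies on: the paper gives no proof of this proposition, citing it directly from Lemmas 3.1--3.2 and Theorem 3.1 of \cite{EKS}, and your reconstruction via the Householder parametrization $P(\mathbf{u})$, the conjugation identity $\widehat{\delta G}=\delta\tilde G+[\Omega,\tilde G]$, and the block-triangular Jacobian yielding $|\det(\lambda I_{N-1}-G^{(N-1)})|$ is precisely that computation. As a minor simplification, your cautious point about transferring the sphere measure is automatic: since $\Omega\,\mathbf{e}_1=P\,\delta\mathbf{u}$ with $P$ orthogonal and $\delta\mathbf{u}\perp\mathbf{u}$, the map $\delta\mathbf{u}\mapsto\mathbf{g}$ is a linear isometry onto $\{0\}\times\R^{N-1}$, so $d\mathbf{g}$ coincides exactly with the round measure $d\mathbf{u}$ and the sphere integration contributes only $\mathrm{vol}(S^{N-1})$ (times the factor $2$ from the sign ambiguity $\mathbf{u}\to-\mathbf{u}$), absorbed into ${\cal C}_{1,N}$.
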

The above j.p.d. can be used to calculate the Laplace transform of the probability density for our main object of interest, the random variable
$\mathbf{b}_{\lambda}^T\mathbf{b}_{\lambda}$ which for a given value of $\lambda$ is given by (\ref{bvecnormreal}), or equivalently the characteristic function
\begin{equation}\label{Laplace}
\mathcal{L}_{\tilde{\lambda}}(p)=\mathbb{E}_{\lambda,\mathbf{w},G^{(N-1)}}\left\{e^{-p\mathbf{b}_{\lambda}^T\mathbf{b}_{\lambda}}
\delta\left(\lambda-\tilde{\lambda}\right)\right\}=
{\cal C}_{N}e^{-\frac{1}{2}\tilde{\lambda}^2}
\int dG^{(N-1)}\, e^{-\frac{1}{2}\Tr G^{(N-1)}G^{(N-1)T}}|\det{(\tilde{\lambda} I_{N-1}-G^{(N-1)})}|\,
\end{equation}
\[
\times \int d\mathbf{w} e^{-\frac{1}{2}\mathbf{w}^T\mathbf{w}-p\mathbf{w}^T\left[\left(\tilde{\lambda}\, I_{N-1}-G^{(N-1)}\right)^{T}\left(\tilde{\lambda}\, I_{N-1}-G^{(N-1)}\right)\right]^{-1}\,\mathbf{w}}
\]
As the integral over $\textbf{w}$ is Gaussian and $p>0$ it can be readily performed yielding the factor
\begin{equation}\label{detfact}
\det^{-1/2}\left(I_{N-1}+2p\left[\left(\tilde{\lambda}\, I_{N-1}-G^{(N-1)}\right)^{T}\left(\tilde{\lambda}\, I_{N-1}-G^{(N-1)}\right)\right]^{-1}\right)
\end{equation}
\[
=\frac{|\det{(\tilde{\lambda} I_{N-1}-G^{(N-1)})}|}{\det^{1/2}\left(2pI_{N-1}+\left(\tilde{\lambda}\, I_{N-1}-G^{(N-1)}\right)^{T}\left(\tilde{\lambda}\, I_{N-1}-G^{(N-1)}\right)\right)}\,
\]
where we have used $\det\left(\lambda\, I_{N-1}-G^{(N-1)}\right)^{T}=\det\left(\lambda\, I_{N-1}-G^{(N-1)}\right)$. Combining all the factors we finally see that  the characteristic function in question is proportional to the ensemble average of the ratio of determinants, cf. (\ref{DetGenbeta}) for $\beta=1$, which we also may present in an equivalent, but different form convenient for further evaluation:
\begin{equation}\label{Laplacereal}
\mathcal{L}_{\lambda}(p)= \frac{1}{\mathcal{N}} e^{-\frac{1}{2}\lambda^2} \mathcal{D}^{(2)}_{N-1,1}(\lambda,p),
\end{equation}
where as will be found below (see the Corollary 3.4) $\mathcal{N}=2^{N/2}\Gamma(N/2)$ and
\begin{equation}\label{Laplace1}
  \quad \mathcal{D}^{(2)}_{N-1,1}(\lambda,p)=\left\langle \frac{\det{\begin{pmatrix}
  0  & i(\lambda\, I_{N-1}-G) \\
  i(\lambda I_{N-1}-G^T) & 0
\end{pmatrix}}} {\det^{1/2}{\begin{pmatrix}
  \sqrt{2p}\,I_{N-1}  & i(\lambda I_{N-1}-G) \\
  i(\lambda I_{N-1}-G^T) & \sqrt{2p}\,I_{N-1}
\end{pmatrix}}} \right\rangle_{Gin_1, N-1}.
\end{equation}
Here the ensemble average is performed over the j.p.d. (\ref{Ginibre}) of real Ginibre matrices $G$ of the reduced size $(N-1)\times (N-1)$.
The problem of averaging the ratio of determinants in the above expression can be efficiently solved in the framework of the supersymmetry approach. The main steps of the corresponding procedure are presented in the following section. Interestingly, when implementing such an approach inverting the Laplace transform comes as a part of the procedure. In this way one recovers first (\ref{Main}) which by straightforward algebraic manipulations can be shown to be equivalent to (\ref{Mainpositive}).

\subsection{Supersymmetry approach to the ratio of determinants and proof of Theorem \ref{Theorem1}}\label{susy}
  In this section we evaluate
that ensemble average for real Ginibre matrices $G$ of size $N\times N$, with the main object of interest being
\begin{equation}\label{DNdef}
 \mathcal{D}_{N,1}^{(2)}(\lambda,p):=\left\langle \frac{\det{\begin{pmatrix}
  0  & i(\lambda\, I_{N}-G) \\
  i(\lambda I_{N}-G^T) & 0
\end{pmatrix}}} {\det^{1/2}{\begin{pmatrix}
  \sqrt{2p}\,I_{N}  & i(\lambda I_{N}-G) \\
  i(\lambda I_{N}-G^T) & \sqrt{2p}\,I_{N}
\end{pmatrix}}} \right\rangle_{Gin_1, N}\,.
\end{equation}
Our goal is to verify the following
\begin{proposition}
\begin{equation}\label{DNequal}
 \mathcal{D}_{N,1}^{(2)}(\lambda,p)=C_N\, e^{\lambda^2}\int_{\mathbb{R}_+}\frac{dt}{t^2}e^{-pt} e^{-\frac{\lambda^2}{2\left(1+\frac{1}{t}\right)}}\left(\frac{t}{1+t}\right)^{\frac{N+2}{2}}
\left[\Gamma(N+1,\lambda^2)-\lambda^2 \frac{t}{1+t}\Gamma(N,\lambda^2)\right]
 \end{equation}
 with the constant $C_N=\frac{1}{2^{N/2}\Gamma\left(\frac{N}{2}\right)}$.
\end{proposition}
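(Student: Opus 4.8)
The plan is to evaluate (\ref{DNdef}) by the supersymmetry method in the rigorous form of \cite{YF2002}. First I would simplify the two block determinants. Writing $W=\lambda I_N-G$ (a real matrix, since $\lambda\in\R$), the antidiagonal block in the numerator satisfies $(-1)^N\det(iW)\det(iW^T)=\det^2 W$, while a Schur-complement computation gives $\det\begin{pmatrix}\sqrt{2p}\,I_N & iW\\ iW^T & \sqrt{2p}\,I_N\end{pmatrix}=\det\left(2p\,I_N+W^TW\right)$. Hence $\mathcal D^{(2)}_{N,1}(\lambda,p)=\langle \det^2 W/\det^{1/2}(2p\,I_N+W^TW)\rangle_{Gin_1}$, reproducing (\ref{DetGenbeta}) for $\beta=1,\,L=2$. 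The purpose of retaining the $2N\times 2N$ block form is that both determinants are then built from the \emph{same} matrix $\lambda I-G$ appearing \emph{linearly}, which is exactly what makes the subsequent Gaussian average over $G$ tractable.

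Next I would introduce the superintegral representation. The numerator determinant I write as a Gaussian integral over a $2N$-component complex Grassmann vector $\zeta=(\chi_1,\chi_2)$, and the inverse square-root determinant as a Gaussian integral over a $2N$-component real bosonic vector $S=(S_1,S_2)$, with $S_1,S_2,\chi_1,\chi_2$ each $N$-component. Because $\det^{1/2}(2p\,I+W^TW)>0$ but the block matrix $\begin{pmatrix}\sqrt{2p}\,I & iW\\ iW^T & \sqrt{2p}\,I\end{pmatrix}$ is only complex-symmetric, the bosonic integral must be taken on suitably rotated contours; this is precisely the point handled by the rigorous variant of \cite{YF2002}, and establishing it here is the first technical hurdle. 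The combined exponent is linear in $W$: schematically the $W$-coupling takes the dyadic form $-i\,\Tr(WM)$ with $M=S_2S_1^T+\chi_2\chi_1^\dagger+\dots$ rank-structured in the $N$ colour indices, while the $\sqrt{2p}$ term contributes $-\tfrac{\sqrt{2p}}{2}(|S_1|^2+|S_2|^2)$.

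I would then carry out the average over $G$. Splitting $W=\lambda I-G$, the $\lambda I$ piece produces an explicit factor $e^{\,\lambda(\dots)}$, and the Gaussian integration $\langle e^{\Tr(GM)}\rangle_{Gin_1}\propto e^{\frac12\Tr(MM^T)}$ turns the action into a quartic expression in $S$ and $\chi$ in which the colour indices are fully contracted into a few scalar invariants ($S_1^TS_1$, $S_1^TS_2$, $\chi_1^\dagger\chi_1$, and so on). A Hubbard--Stratonovich transformation then decouples these quartic terms at the cost of one auxiliary supermatrix $Q$ of \emph{fixed} size (independent of $N$), whose boson--boson and fermion--fermion blocks carry the symmetric/antisymmetric structure dictated by the orthogonal ($\beta=1$) class. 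Once $Q$ is in place the $N$ colour components of $(S,\chi)$ decouple and become identical, so integrating them out leaves an inverse superdeterminant $[\sdet(\cdots)]^{-N}$, concentrating the entire $N$-dependence into this power.

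Finally I would reduce the problem to the finite-dimensional integral over $Q$. Performing the finitely many Grassmann integrations explicitly and then the remaining bosonic integrals, I expect the radial bosonic variable conjugate to the $\sqrt{2p}$ term to play the role of $t$: it carries the weight $e^{-pt}$, so the inversion of the Laplace transform emerges automatically and produces the $\int_{\R_+}dt\,e^{-pt}(\dots)$ structure of (\ref{DNequal}). The shift by $\lambda$ generates the Gaussian factors $e^{\lambda^2}e^{-\lambda^2/2(1+1/t)}$, and the two remaining one-dimensional bosonic integrals, each essentially of the form $\int e^{-u}u^{N-1}\,du$ with a lower cutoff $\lambda^2$ coming from the positivity domain of the bosonic sector, supply $\Gamma(N+1,\lambda^2)$ and $\Gamma(N,\lambda^2)$; matching prefactors fixes $C_N=1/(2^{N/2}\Gamma(N/2))$. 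The main obstacle throughout is the bookkeeping of the $\beta=1$ superstructure: choosing correct convergence contours for the bosonic sector and parametrizing $Q$ with its orthogonal-class blocks so that the Grassmann integrations and the resulting incomplete-Gamma combination come out with the right signs and constants.
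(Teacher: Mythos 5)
Your setup (the reduction of both block determinants, the Grassmann/real-vector Gaussian representations, and the Gaussian average over $G$) coincides with the paper's. The divergence, and the genuine gap, lies in how you treat the resulting quartic terms. You propose to decouple \emph{all} quartic invariants at once with a fixed-size supermatrix $Q$ and then integrate out $(S,\chi)$ to obtain $[\sdet(\cdots)]^{-N}$. That is the standard sigma-model route, and for the orthogonal class its boson--boson sector requires nontrivial contour/domain constructions which you never set up: you assert this ``is precisely the point handled by the rigorous variant of \cite{YF2002}'', but that reference does the opposite --- its whole purpose is to \emph{avoid} the bosonic Hubbard--Stratonovich transformation. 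The paper accordingly decouples only the single Grassmann quartic $(\Phi_1^T\Phi_2)(\Psi_1^T\Psi_2)$ by a scalar complex variable $q$ as in (\ref{HSsimple}), keeps the manifestly convergent bosonic quartic $e^{-\frac12(\mathbf{S}_1^T\mathbf{S}_1)(\mathbf{S}_2^T\mathbf{S}_2)}$ in the exponent, and trades the pair $(\mathbf{S}_1,\mathbf{S}_2)$ for the Gram matrix $\hat Q\ge 0$ with Jacobian $\propto(\det\hat Q)^{(N-3)/2}$, parametrized as in (\ref{Qparam}). So the hardest step of your plan is exactly the one your cited tool does not perform, and no argument is offered in its place.

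Second, your description of how the final formula emerges is wrong in a way that matters. In the actual computation the incomplete Gamma functions are produced by the \emph{fermionic} HS variable: the Grassmann integration yields a determinant equal to $(|q|^2+\lambda^2)^{N-2}[\cdots]$, and then $\int dq\,d\bar q\, e^{-|q|^2}(|q|^2+\lambda^2)^{N}\propto e^{\lambda^2}\Gamma(N+1,\lambda^2)$, the lower cutoff $\lambda^2$ arising from the shift $u=|q|^2+\lambda^2$. The bosonic domain $\hat Q\ge 0$ carries no $\lambda^2$ cutoff at all: it supplies the Laplace factor $e^{-pt}$ (with $t$ the rescaled entry $Q_1$), the powers of $t/(1+t)$, and the Gaussian $e^{-\frac{\lambda^2}{2(1+1/t)}}$. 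Your claim that the $\Gamma$'s are supplied by bosonic integrals ``with a lower cutoff $\lambda^2$ coming from the positivity domain of the bosonic sector'' therefore misidentifies the mechanism, and the assembly as you describe it would not produce (\ref{DNequal}). A minor further point: the constant $C_N$ is fixed in the paper not by tracking prefactors but by comparing the $p\to\infty$ limit of the result with $\left\langle\det^2(\lambda I_N-G)\right\rangle_{Gin_1}$.
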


{\bf Proof of the Theorem \ref{Theorem1}}. The {\bf Prop. 3.3} when combined with (\ref{Laplacereal}) immediately provides the proof of
 (\ref{Main}), hence of the Theorem \ref{Theorem1}. Namely, to arrive at (\ref{Main})
 one replaces $N\to N-1$ in  (\ref{DNequal}),  and substitutes it into (\ref{Laplacereal}).
Noting that the result assumes the form of a Laplace transform in variable $t$ makes its inversion trivial, and we recover the JPD $\mathcal{P}^{(r)}(t,\lambda)$ of the random variables $t=\mathbf{b}_{\lambda}^T\mathbf{b}_{\lambda}$ and the real eigenvalue $\lambda$ as is given in (\ref{Main}).\\[0.5ex]
{\bf Proof of the  Proposition 3.3}:
\begin{proof}
 Let $\Psi_1,\,\Psi_2,\,\Phi_1,\,\Phi_2$ be four column vectors with $N$ anticommuting components each. Using the standard rules of
Berezin integration one represents the numerator in the ratio (\ref{DNdef}) as a Gaussian integral
\begin{equation}\label{detGrassmanGauss}
\det{\begin{pmatrix}
  0  & i(\lambda\, I_N-G) \\
  i(\lambda I_N-G^T) & 0
\end{pmatrix}}
\end{equation}
\begin{equation*}
 = \int d\Psi_1\,d\Psi_2\,d\Phi_1\,d\Phi_2 \,\,  e^{-i(\Psi^T_1,\Phi^T_1)\begin{pmatrix}
  0  & \lambda\, I_N-G \\
  \lambda I_N-G^T & 0
\end{pmatrix}\begin{pmatrix}\Psi_2 \\ \Phi_2\end{pmatrix} }\,.
\end{equation*}
Now we further use a form of the standard Gaussian integral well-defined for any real-symmetric matrix $A$ and any positive $\epsilon>0$:
  \begin{equation}\label{intGauss}
  \int_{\mathbb{R}^N} d\mathbf{S} e^{-\frac{1}{2}\mathbf{S}^T\left(\epsilon I_N+iA\right)\mathbf{S}}=\frac{(2\pi)^{N/2}}{\det^{1/2}{\left(\epsilon I_N+iA\right)}}\,,
  \end{equation}
  where the integration goes over the vector $\mathbf{S}$ with $N$ real commuting components.
This allows to represent the denominator in (\ref{DNdef}) as  a Gaussian integral over two such vectors $\mathbf{S}_{1,2}$ :
\begin{equation}\label{invdetGauss}
\det^{-1/2}{\begin{pmatrix}
  \sqrt{2p}\,I_N  & i(\lambda I_N-G) \\
  i(\lambda I_N-G^T) & \sqrt{2p}\,I_N
\end{pmatrix}}
 \propto  \int_{\mathbb{R}^N} d\mathbf{S}_{1} \int_{\mathbb{R}^N}d\mathbf{S}_{2}\, e^{-\frac{1}{2}(\mathbf{S}_{1}^T,\mathbf{S}_{2}^T)\begin{pmatrix}
   \sqrt{2p}\,I_N   & i(\lambda\, I_N-G) \\
  i(\lambda I_N-G^T) &  \sqrt{2p}\,I_N
\end{pmatrix}\begin{pmatrix}\mathbf{S}_{1} \\ \mathbf{S}_{2}\end{pmatrix} }\,.
\end{equation}
where $\propto$ here and below stands for (temporaly) ignored  multiplicative constants (in general, $N-$dependent) whose product will be restored in the very end of the procedure.
After substituting the above representations to  (\ref{DNdef}) and rearranging in the exponent as $$\mathbf{S}_{1}^T G \mathbf{S}_{2}=\Tr \left(G\mathbf{S}_{2}\otimes \mathbf{S}_{1}^T\right) \quad \mbox{and} \quad \mathbf{\Psi}_{1}^T G \mathbf{\Phi}_{2}=-\Tr \left(G\mathbf{\Phi}_{2}\otimes \mathbf{\Psi}_{1}^T \right),$$ etc, where $M=\mathbf{a}\otimes \mathbf{b}^T$ stands for the matrix with entries $M_{ij}=a_ib_j$,
  one can easily perform the averaging over the real Ginibre
matrices by using the identity
\begin{equation}\label{Gauident}
\left\langle
e^{-\Tr(GA+G^TB)}
\right\rangle_{Gin_1}=e^{\frac{1}{2}\Tr(A^TA+B^TB+2AB)}\,.
\end{equation}
After the ensemble average is performed, there exists only one term in the exponential in the integrand which is quartic in anticommuting variables, and it
is of the form $\left(\Phi_1^T\Phi_2\right)\left(\Psi_1^T\Psi_2\right)$. The corresponding exponential factor is then represented as:
\begin{equation}\label{HSsimple}
e^{\left(\Phi_1^T\Phi_2\right)\left(\Psi_1^T\Psi_2\right)}=\frac{1}{2\pi}\int dq\, d\overline{q} e^{-|q|^2-q\left(\Psi_1^T\Psi_2\right)-\overline{q}\left(\Phi_1^T\Phi_2\right)}\,,
\end{equation}
where the formula above represents the simplest instance of what is generally known as the Hubbard-Stratonovich transformation.
After such a representation is employed, it allows to perform the (by now, Gaussian) integration over the anticommuting vectors explicitly, and reduce the whole expression to the integral over the two vectors $\mathbf{S}_{1,2}$ and over a single complex variable $q$:
\begin{equation}\label{intermed}
\mathcal{D}_{N,1}^{(2)}(\lambda,p)\propto \int dq\,d\overline{q}\, e^{-|q|^2}  \int d\mathbf{S}_{1} d\mathbf{S}_{2}\, e^{-\frac{1}{2}\sqrt{2p}(\mathbf{S}_{1}^T\mathbf{S}_{1}+\mathbf{S}_{2}^T\mathbf{S}_{2})-
\frac{i}{2}\lambda(\mathbf{S}_{1}^T\mathbf{S}_{2}+\mathbf{S}_{2}^T\mathbf{S}_{1})-\frac{1}{2}(\mathbf{S}_{1}^T\mathbf{S}_{1})(\mathbf{S}_{2}^T\mathbf{S}_{2})}
 \end{equation}
\[
\times \det{\begin{pmatrix}
  q\,I_N  & i\lambda I_N+\textbf{S}_1\otimes \textbf{S}_2^T \\
  i \lambda I_N + \textbf{S}_2\otimes \textbf{S}_1^T & \overline{q}\,I_N
\end{pmatrix}}
\]
A straightforward calculation shows that the determinant in the above expression is equal to
\[
(|q|^2+\lambda^2)^{N-2}
\left[(|q|^2+\lambda^2)^{2}+(|q|^2+\lambda^2)\left(-2i\lambda(\mathbf{S}_{1}^T\mathbf{S}_{2})-
(\mathbf{S}_{1}^T\mathbf{S}_{1})(\mathbf{S}_{2}^T\mathbf{S}_{2})\right)+\lambda^2\left((\mathbf{S}_{1}^T\mathbf{S}_{1})(\mathbf{S}_{2}^T\mathbf{S}_{2})-
(\mathbf{S}_{1}^T\mathbf{S}_{2})^2\right)\right]
\]
and we see that the integration over $q,\overline{q}$ is now easy to perform via using the polar coordinates:
\begin{equation}
\frac{1}{4\pi}\int dq\,d\overline{q}\, e^{-|q|^2}(|q|^2+\lambda^2)^N =e^{\lambda^2}\int_{\lambda^2}^{\infty}dt e^{-t}t^N:= e^{\lambda^2}\Gamma(N+1,\lambda^2)\,.
\end{equation}
 As to the remaining integrations,
 one may notice that the integrand depends only on the entries of a positive semidefinite real symmetric matrix
 \begin{equation}\label{QR}
 \hat{Q}=\begin{pmatrix}
  Q_1 & Q \\
  Q & Q_2
  \end{pmatrix}, \quad Q_1= \mathbf{S}_{1}^T\mathbf{S}_{1}, \, Q_2= \mathbf{S}_{2}^T\mathbf{S}_{2}, \, Q=\mathbf{S}_{1}^T\mathbf{S}_{2}
\end{equation}
A useful trick suggested in \cite{YF2002} in such a situation is to pass from the pair of vectors $(\mathbf{S}_{1}, \mathbf{S}_{2})$ to the matrix $ \hat{Q}$ as a new integration variable. Such change is non-singular for $N\ge 2$ and incurs a Jacobian factor proportional to $\det{\hat{Q}}^{(N-3)/2}$ (see the Appendix D of \cite{FyoStrahKaehler}). This finally brings $\mathcal{D}^{(2)}_{N,1}(\lambda,p)$  to the form
\begin{equation}\label{Laplace1a}
\mathcal{D}_{N,1}^{(2)}(\lambda,p)\propto e^{\lambda^2}\int_{\hat{Q}\ge 0} d\hat{Q} \, \det{\hat{Q}}^{(N-3)/2} e^{-\frac{1}{2}\sqrt{2p}(Q_1+Q_2)-i\lambda Q -\frac{1}{2}Q_1Q_2}
\end{equation}
\[\times  \left[\Gamma(N+1,\lambda^2)+\Gamma(N,\lambda^2)(-2i\lambda Q-Q_1Q_2)+\Gamma(N-1,\lambda^2)\lambda^2\det{Q}\right]\,.
\]
The next step requires employing a convenient parametrization of the integration domain defined via the inequalities  $Q_1\ge 0,\,  Q_2\ge 0, \, -\infty<Q<\infty$ and $Q_1Q_2\ge Q^2$ which ensure that $\hat{Q}=\begin{pmatrix}
  Q_1 & Q \\
  Q & Q_2
  \end{pmatrix}$ is a real symmetric positive semidefinite matrix.  First, it is easy to see that such domain can be parametrized by expressing the diagonal entries $Q_1$ and $Q_2$ in terms of two real coordinates $r\ge 0,\,-\infty<\theta<\infty$ chosen in such a way that $r=\left(\det \hat{Q}\right)^{1/2},\,\, \theta=\frac{1}{2}\ln{\left(Q_1/Q_2\right)}$.  By
   explicitly writing $Q_1=e^{\theta}\sqrt{r^2+Q^2},\, Q_2=e^{-\theta}\sqrt{r^2+Q^2}$ and evaluating the associated Jacobian we get in those coordinates
  $ d\hat{Q}:=dQ_1dQ_2dQ=2\, r\,dr \, dQ\,d\theta$. Although calculation in that parametrization is already quite convenient, it turns out that it becomes
  even shorter if one parametrizes the same domain in a related, but slightly less obvious way using instead the matrix entries $Q_1\ge 0$ and $-\infty<Q<\infty$ as new coordinates, complemented with $r=\left(\det \hat{Q}\right)^{1/2}\ge 0$, and expressing the remaining entry as $Q_2=\frac{r^2+Q^2}{Q_1}\ge 0$. This finally gives
   \begin{equation}\label{Qparam}
 \hat{Q}=\begin{pmatrix}
  Q_1 & Q \\
  Q & \frac{r^2+Q^2}{Q_1}
  \end{pmatrix}, \quad d\hat{Q}=2\, \frac{dQ_1}{Q_1}\, r\,dr \, dQ; \quad -\infty\le Q \le \infty,\, 0\le Q_1,r <\infty.
\end{equation}
and further changing $Q_1\to \sqrt{2p}Q_1$  brings (\ref{Laplace1a}) to the form
\begin{equation}\label{Laplace1bNEW}
\mathcal{D}_{N,1}^{(2)}(\lambda,p) \propto e^{\lambda^2}  \int_{\mathbb{R}_{+}} dr \, r^{N-2}e^{-\frac{r^2}{2}}\int_{\mathbb{R}} \frac{dQ}{\sqrt{2\pi}} e^{-i\lambda Q -\frac{1}{2}Q^2} \int_{\mathbb{R}_{+}}\frac{dQ_1}{Q_1}\,e^{-p Q_1-\frac{r^2+Q^2}{2Q_1}}
\end{equation}
\[\times  \left[\Gamma(N+1,\lambda^2)+\Gamma(N,\lambda^2)(-2i\lambda Q-Q^2-r^2)+\Gamma(N-1,\lambda^2)\lambda^2r^2\right]\]
where all integrals are well-defined and convergent for $p>0$; in particular, the latter one can be evaluated explicitly in terms of the Bessel function  of second kind as ( see  \cite{GR}, p.363)
\[
 \int_{\mathbb{R}_{+}}\frac{dQ_1}{Q_1}\,e^{-p Q_1-\frac{r^2+Q^2}{2Q_1}}=2K_0\left(\sqrt{2p\left(r^2+Q^2\right)}\right).
\]
In principle, one can demonstrate existence of a chain of integral identities which allows to perform the remaining integrations in (\ref{Laplace1bNEW}) explicitly without  changing the order of integrations. This way leads however to quite cumbersome intermediate formulas, and we proceed instead by changing the order in (\ref{Laplace1bNEW}) (which can be justified by Fubini's theorem) to
\begin{equation}\label{Laplace1b}
\mathcal{D}_{N,1}^{(2)}(\lambda,p) \propto e^{\lambda^2}  \int_{\mathbb{R}_{+}}\frac{dQ_1}{Q_1}e^{-p Q_1} \int_{\mathbb{R}_{+}} dr \, r^{N-2}e^{-\frac{r^2}{2}\left(1+\frac{1}{Q_1}\right)} \int_{\mathbb{R}} \frac{dQ}{\sqrt{2\pi}} e^{-i\lambda Q -\frac{1}{2}Q^2\left(1+\frac{1}{Q_1}\right)}
\end{equation}
\[\times  \left[\Gamma(N+1,\lambda^2)+\Gamma(N,\lambda^2)(-2i\lambda Q-Q^2-r^2)+\Gamma(N-1,\lambda^2)\lambda^2r^2\right]
\]
which allows to perform  the integrals over $Q$ and $r$ much more efficiently.
 Namely, introduce the function
\begin{equation}\label{defA}
A_{N}(s,t)=2\int_{\mathbb{R}_{+}} dr \, r^{N}e^{-\frac{r^2}{2}(s+1/t)}\int_{\mathbb{R}} \frac{dQ}{\sqrt{2\pi}} e^{-is\lambda Q -\frac{1}{2}Q^2\left(s+\frac{1}{t}\right)}
=2^{\frac{N+1}{2}}\Gamma\left(\frac{N+1}{2}\right)\left(s+\frac{1}{t}\right)^{-\frac{N+2}{2}}
e^{-\frac{s^2\lambda^2}{2\left(s+\frac{1}{t}\right)}}
\end{equation}
Then it is easy to see that after renaming $Q_1\to t$ the equation (\ref{Laplace1b}) can be rewritten as
\begin{equation}\label{Laplace1c}
\mathcal{D}_{N,1}^{(2)}(\lambda,p) \propto e^{\lambda^2}\lim_{s\to 1}\int_0^{\infty}\frac{dt}{t}e^{-pt}
\left\{\left[\Gamma(N+1,\lambda^2)+2\Gamma(N,\lambda^2)\frac{d}{ds}\right]A_{N-2}(s,t) +\Gamma(N-1,\lambda^2)\lambda^2 A_{N}(s,t)\right\}
\end{equation}
\begin{equation}
\propto \,\, e^{\lambda^2}\int_{\mathbb{R}_+}\frac{dt}{t}e^{-pt} e^{-\frac{\lambda^2}{2\left(1+\frac{1}{t}\right)}}\left(1+\frac{1}{t}\right)^{-\frac{N+2}{2}}
\end{equation}
\[
\times \left[\Gamma(N+1,\lambda^2)\left(1+\frac{1}{t}\right)
-\Gamma(N,\lambda^2)\left(N+\lambda^2\left(1+\frac{1}{1+t}\right)\right)+\lambda^2 \Gamma(N-1,\lambda^2)(N-1)\right]
\]
Now by using the relation $\Gamma(N+1,\lambda^2)=e^{-\lambda^2}\lambda^{2N}+N\Gamma(N,\lambda^2)$ one can see that
\[
\Gamma(N+1,\lambda^2)-(N+\lambda^2)\Gamma(N,\lambda^2)+\lambda^2(N-1)\Gamma(N-1,\lambda^2)=0
\]
implying finally
\begin{equation}\label{DNFin}
\mathcal{D}_{N,1}^{(2)}(\lambda,p) = C_N e^{\lambda^2}\int_{\mathbb{R}_+}\frac{dt}{t^2}e^{-pt} e^{-\frac{\lambda^2}{2\left(1+\frac{1}{t}\right)}}\left(\frac{t}{1+t}\right)^{\frac{N+2}{2}}\left[e^{-\lambda^2}\lambda^{2N}+\Gamma(N,\lambda^2)\left(N-\lambda^2 \frac{t}{1+t}\right)\right]
\end{equation}
for some real constant $C_N$. Simple manipulations with incomplete $\Gamma-$function (\ref{Gamma}) show that this is equivalent to (\ref{DNequal}).  To establish the value for the constant one can use, for example, the limit $p\to \infty$ where according to the definition (\ref{DNdef})
\begin{equation}\label{limit1}
\lim_{p\to \infty} (2p)^{N/2}\mathcal{D}_{N,1}^{(2)}(\lambda,p) =\left\langle \det^2{(\lambda I_N-G)} \right\rangle_{Gin_1, N}
\end{equation}
On the other hand, performing $p\to \infty$ limit in (\ref{DNFin}) gives after a simple calculation
\begin{equation}\label{limit2}
\lim_{p\to \infty} (2p)^{N/2}\mathcal{D}_{N,1}^{(2)}(\lambda,p) = C_N \, 2^{N/2}\Gamma\left(\frac{N}{2}\right)\left[\lambda^{2N}+e^{\lambda^2}\,N\,\Gamma(N,\lambda^2) \right]
\end{equation}
The definition of the left-hand side implies that the coefficient in front of $\lambda^{2N}$ must be equal to unity, giving finally
\begin{equation}\label{const}
 C_N = \frac{1}{2^{N/2}\Gamma\left(\frac{N}{2}\right)}, \quad \left\langle \det^2{(\lambda I_N-G)} \right\rangle_{Gin_1, N}=\lambda^{2N}+e^{\lambda^2}\,N\,\Gamma(N,\lambda^2)
\end{equation}
\end{proof}

\begin{corollary}
The normalization constant $\mathcal{N}$ in Eq.(\ref{Laplacereal}) is given by $\mathcal{N}=2^{N/2}\Gamma(N/2)$.
\end{corollary}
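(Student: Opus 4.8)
The plan is to fix $\mathcal{N}$ through the only normalization left undetermined by the supersymmetry derivation, namely the requirement that the zeroth $t$-moment of the joint density reproduce the \emph{independently known} mean density of real eigenvalues $\rho^{(r)}_N(\lambda)$ of Edelman, Kostlan and Schub, Eq.~(\ref{Edelman}). First I would combine Proposition 3.3, taken with $N\to N-1$ so that $C_{N-1}=\left[2^{(N-1)/2}\Gamma\!\left(\tfrac{N-1}{2}\right)\right]^{-1}$, with the relation (\ref{Laplacereal}). Because the right-hand side of (\ref{DNequal}) is manifestly a Laplace transform in $t$, the inversion is immediate and gives
\begin{equation*}
\mathcal{P}^{(r)}(t,\lambda)=\frac{C_{N-1}}{\mathcal{N}}\,e^{\frac{\lambda^2}{2}\frac{1}{1+t}}\,\frac{1}{t(1+t)}\left(\frac{t}{1+t}\right)^{\frac{N-1}{2}}\left[\Gamma(N,\lambda^2)-\lambda^2\frac{t}{1+t}\Gamma(N-1,\lambda^2)\right],
\end{equation*}
using the elementary rearrangements $e^{\lambda^2/2}e^{-\lambda^2/2(1+1/t)}=e^{\frac{\lambda^2}{2}\frac{1}{1+t}}$ and $t^{-2}(t/(1+t))^{(N+1)/2}=[t(1+t)]^{-1}(t/(1+t))^{(N-1)/2}$. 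This is exactly the $t,\lambda$-dependence of (\ref{Main}), with the overall constant replaced by the still-unknown ratio $C_{N-1}/\mathcal{N}$.

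Next I would impose $\int_{\mathbb{R}_{+}}\mathcal{P}^{(r)}(t,\lambda)\,dt=\rho^{(r)}_N(\lambda)$, which is just the $p=0$ specialization of $\mathcal{L}_\lambda(p)$ and holds by the definition of the JPD. Performing the $t$-integral with the substitution $u=|\lambda|\sqrt{t/(1+t)}$ (the same change of variable indicated after (\ref{Main})) collapses the integrand to the Edelman--Kostlan--Schub form (\ref{Edelman}). The entire $\lambda$-dependence then matches automatically, which serves as a built-in consistency check on the whole calculation, and only the $\lambda$-independent prefactor remains to be equated. Comparing it with the known normalization $\frac{1}{\sqrt{2\pi}(N-2)!}$ of (\ref{Edelman}) leaves the single scalar equation
\begin{equation*}
\frac{C_{N-1}}{\mathcal{N}}=\frac{1}{2\sqrt{2\pi}}\,\frac{1}{(N-2)!}.
\end{equation*}

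Finally I would solve for $\mathcal{N}$ and simplify. Inserting $C_{N-1}=\left[2^{(N-1)/2}\Gamma\!\left(\tfrac{N-1}{2}\right)\right]^{-1}$ gives $\mathcal{N}=2\sqrt{2\pi}\,(N-2)!\,\big/\,\big[2^{(N-1)/2}\Gamma\!\left(\tfrac{N-1}{2}\right)\big]$, and the whole content of the claim is that this collapses to $2^{N/2}\Gamma(N/2)$. The key and only nontrivial step is the Legendre duplication formula in the form $\Gamma\!\left(\tfrac{N}{2}\right)\Gamma\!\left(\tfrac{N-1}{2}\right)=2^{2-N}\sqrt{\pi}\,(N-2)!$, which is precisely what eliminates the half-integer $\Gamma$-values together with the stray powers of $2$ and $\sqrt{\pi}$: substituting it yields $2^{N/2}\Gamma(N/2)\cdot 2^{(N-1)/2}\Gamma\!\left(\tfrac{N-1}{2}\right)=2^{3/2}\sqrt{\pi}\,(N-2)!=2\sqrt{2\pi}\,(N-2)!$, whence $\mathcal{N}=2^{N/2}\Gamma(N/2)$, as asserted. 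I expect no genuine obstacle: once the $t$-integration is reduced to the EKS form, everything is forced, and the only point requiring care is the bookkeeping of the powers of $2$ and $\pi$ in the duplication identity.
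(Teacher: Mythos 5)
Your proposal is correct and follows essentially the same route as the paper: both fix $\mathcal{N}$ by evaluating the zeroth $t$-moment (equivalently, the $p\to 0$ limit of $\mathcal{L}_\lambda(p)$), performing the $t$-integral via the substitution $u=|\lambda|\sqrt{t/(1+t)}$, matching against the independently known Edelman--Kostlan--Shub density (\ref{Edelman}), and invoking the Legendre duplication formula $\Gamma\left(\tfrac{N}{2}\right)\Gamma\left(\tfrac{N-1}{2}\right)=2^{2-N}\sqrt{\pi}\,(N-2)!$ to collapse the constants. Your preliminary Laplace inversion before integrating over $t$ is an immaterial reordering of the same computation.
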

\begin{proof}

To establish the value of the constant $\mathcal{N}$ we consider the limit $p\to 0$ in both sides of (\ref{Laplacereal}). By the very definition of the Laplace transform $\mathcal{L}_{\lambda}(p)$ its value at $p=0$ must be equal to the mean density of real eigenvalues for the real Ginibre ensemble
given in (\ref{Edelman}). On the other hand, for $p=0$ the integration over $t$ in (\ref{DNequal}) can be easily performed introducing $u=|\lambda|\sqrt{\frac{t}{1+t}}$
as new integration variable. One gets in this way:
\begin{equation}\label{EKS}
\mathcal{D}_{N,1}^{(2)}(\lambda,0)=\frac{2}{2^{N/2}\Gamma\left(\frac{N}{2}\right)}
\left[e^{\frac{\lambda^2}{2}}\Gamma\left(N,\lambda^2\right)+|\lambda|^{N}\int_0^{|\lambda|} e^{-\frac{u^2}{2}} u^{N-1}\,du\right]
\end{equation}
To get $\mathcal{D}_{N-1,1}^{(2)}(\lambda,0)$ featuring in the right-hand side of (\ref{Laplacereal}) replace in the above $N\to N-1$ and use $\Gamma(N)=\frac{2^{N-1}}{\sqrt{\pi}} \Gamma\left(\frac{N}{2}\right)\Gamma\left(\frac{N+1}{2}\right)$. Multiplying with $e^{-\lambda^2/2}$ and comparing
with the left-hand side gives the value for the constant $\mathcal{N}$.
\end{proof}

\section{Proof of Theorem \ref{Theorem2}}
Our approach to complex Ginibre matrices follows essentially the same steps as for the real case, with very little modifications, and we
 only briefly indicate necessary changes.   Similarly to (\ref{reductreal}), suppose that a complex-valued matrix $G^{(N)}$ has only non-degenerate eigenvalues, and assuming it has an eigenvalue $z$ (in general, complex) it can be represented as (see e.g. Sec. 6 of \cite{FyoKhor2007})
\begin{equation}\label{reduct}
G^{(N)}=R
\begin{pmatrix}
  z  & {\bf w}^{*} \\
  0 & G^{(N-1)}
\end{pmatrix} R^{*}
\end{equation}
for some complex $(N-1)$- component column vector $\mathbf{w}$ and a complex matrix $ G^{(N-1)}$ of size  $(N-1)\times (N-1)$, whereas the matrix $R$ is Hermitian and unitary: $R^*=R$ and $RR^{*}=R^2=I_N$. Considering the right eigenvectors of the 'rotated' matrix $\tilde{G}^{(N)}$ gives  $\mathbf{\tilde{x}}_R=\mathbf{e}_1$, whereas $\mathbf{\tilde{x}}^*_L=(1,\mathbf{b}^*_{z})$, where $\mathbf{b}^*_{z}$ is an $(N-1)-$component complex row vector. This implies ${\cal \tilde{O}}_{z}=1+\mathbf{b}_{z}^*\mathbf{b}_{z}$, whereas the vector $\mathbf{b}_{z}^*$ is readily expressed in terms of the resolvent of $\tilde{G}^{(N)}$ as  $\mathbf{b}_{z}^*= \mathbf{w}^*\left(z\, I_{N-1}-G^{(N-1)}\right)^{-1}$
implying
\begin{equation}\label{bvecnorm}
 \mathbf{b}_{z}^*\mathbf{b}_{z}=\mathbf{w}^*\left[\left(\overline{z}\, I_{N-1}-G^{(N-1)*}\right)\left(z\, I_{N-1}-G^{(N-1)}\right)\right]^{-1}\,\mathbf{w}
\end{equation}
Now we exploit the analogue of Prop. (\ref{t:jpdfreal})
\begin{proposition}[see Appendix B of \cite{FyoKhor2007}]\label{t:jpdfcomp}

Assume that the j.p.d. of $G$ is given by (\ref{Ginibre}) with $\beta=2$, and apply the transformation (\ref{reduct}). Then the joint probability density
of elements of the matrix  $\tilde{G}$ is given by

\begin{equation}\label{volchangeComplex}
{\cal P}(\tilde{G})d\tilde{G}={\cal C}_{2,N}|\det{(z I_{N-1}-G^{(N-1)})}|^2\,e^{-\left(|z|^2+\mathbf{w}^*\mathbf{w}+\Tr G^{(N-1)}G^{(N-1)*}\right)}\,dzd\overline{z} \,d\mathbf{w}d\mathbf{w}^*\,dG^{(N-1)} \, .
\end{equation}
with some normalization constant ${\cal C}_{2,N}$.
\end{proposition}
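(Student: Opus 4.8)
The plan is to mirror the real-case computation underlying Proposition~\ref{t:jpdfreal} of \cite{EKS}, replacing the orthogonal reflection $P$ by the Hermitian unitary $R$ of (\ref{reduct}) and tracking a complex rather than a real Jacobian. Since (\ref{reduct}) realizes $G^{(N)}=R\tilde G R^*$ with $\tilde G$ block upper-triangular, the only genuine content is (i) the form of the Gaussian weight and (ii) the Jacobian of the change of variables $G\mapsto(\mathbf{v},z,\mathbf{w},G^{(N-1)})$, where $\mathbf{v}$ denotes the normalized right-eigenvector direction that parametrizes $R$. For (i) I would use that $R$ is unitary, so the Hilbert--Schmidt norm is preserved, $\Tr GG^*=\Tr\tilde G\tilde G^*$; expanding the block product of $\tilde G$ with $\tilde G^*$ gives $\Tr\tilde G\tilde G^*=|z|^2+\mathbf{w}^*\mathbf{w}+\Tr G^{(N-1)}G^{(N-1)*}$, which is exactly the exponent in (\ref{volchangeComplex}). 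This step is routine.

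The heart is (ii). I would first check that the real dimensions balance: $2N^2$ on the left against $2$ (for $z$), $2(N-1)$ (for $\mathbf{w}$), $2(N-1)^2$ (for $G^{(N-1)}$), and $2(N-1)$ angular degrees of freedom for $\mathbf{v}$ ranging over the complex projective space $\mathbb{C}P^{N-1}$ of eigenvector directions. I would then exploit the $U(N)$-invariance of the flat Lebesgue measure $dG$ together with the differential identity $R^*\,dG\,R=d\tilde G+[\Omega,\tilde G]$, where $\Omega:=R^*dR$ is anti-Hermitian. Writing $\Omega$ in block form and denoting by $\mathbf{c}=(\Omega_{j1})_{j\ge 2}$ its off-diagonal first column, a short block computation gives that the bottom-left $(N-1)$-block of $[\Omega,\tilde G]$ equals $(zI_{N-1}-G^{(N-1)})\mathbf{c}$, whereas the bottom-left block of $d\tilde G$ vanishes by construction of the decomposition. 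Hence the bottom-left entries of $G$ are generated by the angular variation $\mathbf{c}$ through the $\mathbb{C}$-linear map $\mathbf{c}\mapsto(zI_{N-1}-G^{(N-1)})\mathbf{c}$, whose real Jacobian is $|\det(zI_{N-1}-G^{(N-1)})|^2$ (the square of the modulus reflecting the complex, $\beta=2$, structure; the real case of Proposition~\ref{t:jpdfreal} produces the first power for exactly this reason). The remaining diagonal and upper blocks supply the flat factors $dz\,d\overline z\,d\mathbf{w}\,d\mathbf{w}^*\,dG^{(N-1)}$, while the $\mathbb{C}P^{N-1}$ measure integrates to a finite constant absorbed into $\mathcal{C}_{2,N}$.

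The main obstacle is organizing the full Jacobian so that it factorizes as claimed: one must verify that, after fixing a local section of the fibration $G\mapsto\mathbf{v}$, the map from the free variations $(d\tilde G,\mathbf{c})$ to the blocks of $R^*\,dG\,R$ is block-triangular, so that its determinant is the product of the diagonal-block determinants and the cross terms (for instance the $\mathbf{w}^*\mathbf{c}$ contribution to the $(1,1)$-entry and the couplings of $\Omega$ to the upper blocks) do not disturb the factor $|\det(zI_{N-1}-G^{(N-1)})|^2$. One must also handle the stabilizer directions carefully---the $U(1)$ phase of $\mathbf{v}$ and the residual $U(N-1)$ block of $\Omega$---to confirm that the angular integral is finite and independent of $z,\mathbf{w},G^{(N-1)}$, hence a pure constant. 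This is precisely the content of Appendix~B of \cite{FyoKhor2007}, whose argument I would follow to fill in the bookkeeping.
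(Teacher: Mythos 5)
Your proposal is correct and takes essentially the same route as the paper, which establishes this proposition by direct appeal to Appendix B of \cite{FyoKhor2007}: the identity $R^*\,dG\,R=d\tilde G+[\Omega,\tilde G]$, the resulting $\mathbb{C}$-linear map $\mathbf{c}\mapsto(zI_{N-1}-G^{(N-1)})\mathbf{c}$ whose real Jacobian is $|\det(zI_{N-1}-G^{(N-1)})|^2$, the invariance $\Tr GG^*=\Tr\tilde G\tilde G^*$, and the block-triangular organization of the full Jacobian are precisely the argument of that appendix. Your dimension count ($2N^2=2+2(N-1)+2(N-1)^2+\dim_{\mathbb{R}}\mathbb{C}P^{N-1}$) and your identification of the angular integral as a constant absorbed into $\mathcal{C}_{2,N}$ complete the same bookkeeping the cited reference carries out.
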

Using the above j.p.d. to calculate the Laplace transform of the probability density for the random variable
$\mathbf{b}_{z}^*\mathbf{b}_{z}$  for a fixed value of $z$ we arrive after standard manipulations at representing it as
 the expectation of the ratios of the determinants of the form
\begin{equation}\label{LaplaceComplex}
\mathcal{L}_{z}(p)=  \frac{1}{\pi \Gamma(N)}e^{-|z|^2} {\cal D}^{(2)}_{N-1,2}(z,p)
\end{equation}
where we introduced the notation,  cf. (\ref{DetGenbeta}) for $\beta=2$,
\begin{equation}\label{DetComplex}
 {\cal D}^{(L)}_{N,2}(z,p)=
\left\langle \frac{\det^L{\begin{pmatrix}
  0  & i(z\, I_{N}-G) \\
  i(\overline{z} I_{N}-G^*) & 0
\end{pmatrix}}} {\det{\begin{pmatrix}
  \sqrt{p}\,I_{N}  & i(z I_{N}-G) \\
  i(\overline{z} I_{N}-G^*) & \sqrt{p}\,I_{N}
\end{pmatrix}}} \right\rangle_{Gin_2, N}
\end{equation}
with averaging performed over the j.p.d. (\ref{Ginibre}) of complex Ginibre matrices $G$ of the  size $N\times N$.
Note, that the value of the constant normalization factor in (\ref{LaplaceComplex}) is found {\it aposteriori} exactly in the same way
as in the real case, by comparing  the known expression for the
mean density of complex eigenvalues (\ref{denscomp}) (coinciding with $\mathcal{L}_{z}(0)$) and the corresponding limit in the right-hand side of
(\ref{LaplaceComplex}).

 In the general case evaluating $\mathcal{D}_{N,2}^{(L)}(z,p)$ for integer $L$ can be done essentially by the same  supersymmetry method  which was used in section (\ref{susy}), with obvious necessary modifications imposed by symmetries.
In particular, presence of higher powers of the determinants in the numerator of  (\ref{DetComplex}) necessitates to use $L$ sets of anticommuting vectors for their representation, making the resulting integral representation in our version of the supersymmetry method significantly more cumbersome than in the real case.  In the most relevant case $L=2$ and a special choice of the spectral parameter $z=0$ the expected value of the ratio featuring in the right-hand side of  (\ref{DetComplex}) can be relatively easily extracted as a special limiting case of a more general object evaluated in \cite{FyoAkk2003} or, in a different way, in
 \cite{FyoStrahChiral}.  The corresponding calculation is sketched in the first part of the next section. The supersymmetry approach for $z\ne 0$ works along exactly the same general lines as in the real case, but is somewhat more involved technically. The corresponding calculation is outlined in the second part of the next section.

\subsection{Evaluation of  (\ref{DetComplex}) for $L=2$ }
\subsubsection{ Evaluation of  (\ref{DetComplex}) for $L=2$ and $|z|=0$}
In the special case $|z|=0$ an integral representation of the averaged ratio of determinants featuring in (\ref{LaplaceComplex}) which we find most
convenient for our purposes was derived in \cite{FyoStrahChiral}, see Eq.(29) there. Actually, our object arises as a particular case $n_f=2, n_b=1$ of that formula, identifying $x_b=2\sqrt{p}$ and considering a special limit $X_f=0$ (the latter limit is highly degenerate, and it is easier to perform it directly in eq.(25), and then rederive (29)). In this way we arrive at representing $\mathcal{D}_{N,2}^{(L=2)}(z=0,p)$ as
\begin{equation}\label{comp1}
\left\langle \frac{\det^2{\begin{pmatrix}
  0  & iG \\
  i G^* & 0
\end{pmatrix}}} {\det{\begin{pmatrix}
  \sqrt{p}\,I_{N}  & i G) \\
  iG^* & \sqrt{p}\,I_{N}
\end{pmatrix}}} \right\rangle_{Gin_2}
\propto \int_{0}^{\infty} dR_1 \int_{0}^{\infty} dR_2 (R_1 R_2)^{N-1} e^{-(R_1+R_2)}(R_1-R_2)^2
\end{equation}
\[
\times \int_{0}^{\infty} dR R^{N-1} e^{-R} K_0\left(2\sqrt{pR}\right) (R-R_1)(R-R_2)
\]
Now the integrals over $R_1$ and $R_2$ can be readily evaluated, with the result being simply
\[
\left\langle \frac{\det^2{\begin{pmatrix}
  0  & iG \\
  i G^* & 0
\end{pmatrix}}} {\det{\begin{pmatrix}
  \sqrt{p}\,I_N  & i G) \\
  iG^* & \sqrt{p}\,I_N
\end{pmatrix}}} \right\rangle_{Gin_2}
\propto \left[\mathcal{J}_{N+1}(p)-2(N+1)\mathcal{J}_{N}(p)+N(N+1)\mathcal{J}_{N-1}(p)\right]
\]
where we introduced the notation
\[
\mathcal{J}_N(p)=\int_{0}^{\infty} dR R^{N} e^{-R} K_0\left(2\sqrt{pR}\right)\,.
 \]
Further employing a well-known integral representation for the Bessel function of the second kind
\[
K_0\left(2\sqrt{pR}\right)=\frac{1}{2}\int_0^{\infty}\frac{dt}{t}e^{-pt-\frac{R}{2t}}
\]
shows that
\begin{equation}
\mathcal{J}_N(p)=\frac{1}{2}\int_0^{\infty}\frac{dt}{t}e^{-pt}\int_{0}^{\infty} dR R^{N} e^{-R\left(1+\frac{1}{t}\right)}=\frac{N!}{2}\int\frac{dt}{t}e^{-pt}\frac{1}{\left(1+\frac{1}{t}\right)^{N+1}}
\end{equation}
We finally conclude that
\begin{equation}\label{LaplaceComplexFin}
\mathcal{D}_{N,2}^{(L=2)}(z=0,p)
 = N\,(N+1)! \int_0^{\infty}\frac{dt}{t}e^{-pt}\frac{1}{\left(1+\frac{1}{t}\right)^N}\left(1-\frac{1}{1+\frac{1}{t}}\right)^2
\end{equation}
which after substituting to the Laplace transform (\ref{LaplaceComplex}) is equivalent to (\ref{MainComp}).

\subsubsection{ Evaluation of  (\ref{DetComplex}) for $L=2$ and $|z|\ne 0$ by supersymmetry approach}

Our goal is to verify the following
\begin{proposition}
\begin{equation}\label{LaplacecompFINAL}
\mathcal{D}^{(2)}_{N,2}(z,p) = \frac{1}{(N-1)!}  e^{2|z|^2} \int_{\mathbb{R}_{+}}\frac{dt}{t}\, e^{-p t}\,\,\frac{e^{-|z|^2\frac{t}{1+t}}}{(1+t)^2}
\left(\frac{t}{1+t}\right)^{N}\left(D^{(N+1)}_1+|z|^2\frac{D_2^{(N+1)}}{1+t}+|z|^4\frac{d_1^{(N+1)}}{(1+t)^2}\right)
\end{equation}
where the entering quantities were defined in equations (\ref{LaplacecompFINALcoeffA})-(\ref{LaplacecompFINALcoeffD}).
\end{proposition}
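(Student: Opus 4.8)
The plan is to reproduce, step for step, the supersymmetric evaluation of Section~\ref{susy}, adapting every ingredient to the complex ($\beta=2$) symmetry class and to the squared determinant appearing in the numerator of (\ref{DetComplex}). First I would represent that numerator by Berezin integration; since the power is $L=2$ I introduce two flavors of anticommuting vectors, say $\Psi_1^{(f)},\Psi_2^{(f)},\Phi_1^{(f)},\Phi_2^{(f)}$ with $f=1,2$, each flavor contributing one factor of the determinant exactly as $\Psi_1,\Psi_2,\Phi_1,\Phi_2$ did in (\ref{detGrassmanGauss}). Simultaneously I would write the inverse of the $2N\times 2N$ determinant in the denominator as a Gaussian integral over two commuting complex $N$-vectors $\mathbf S_1,\mathbf S_2$, using the complex counterpart of (\ref{intGauss}).

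Next I would carry out the Ginibre average. After reorganizing the $G$- and $G^*$-dependent exponent through traces of outer products of the commuting and anticommuting vectors, the average is performed with the complex analogue of (\ref{Gauident}); for $\beta=2$ only the cross terms $\Tr(AB)$ survive, since $\langle G_{jk}^2\rangle=0$. As in the real case the sole surviving higher-order Grassmann contributions are quartic, but they now carry flavor indices $f,g\in\{1,2\}$ and assemble into a $2\times 2$ matrix of bilinears. The essential departure from the real computation is therefore the Hubbard--Stratonovich step (\ref{HSsimple}): the single complex scalar $q$ must be promoted to a $2\times 2$ auxiliary matrix coupling the two flavors. Decoupling the quartic terms and then integrating out the anticommuting vectors (now a Gaussian Berezin integral) produces a determinant polynomial in the entries of that auxiliary matrix and in the overlaps of $\mathbf S_1,\mathbf S_2$.

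At this point the integrand depends on the commuting vectors only through the positive-semidefinite $2\times 2$ Hermitian overlap matrix $\hat Q$ with entries $Q_1=\mathbf S_1^*\mathbf S_1$, $Q_2=\mathbf S_2^*\mathbf S_2$, $Q=\mathbf S_1^*\mathbf S_2$, precisely as in (\ref{QR}). I would then trade the pair of vectors for $\hat Q$ itself, picking up the complex Jacobian $\propto(\det\hat Q)^{N-2}$ (the $\beta=2$ analogue of the $(N-3)/2$ exponent of the real case). The auxiliary-matrix integral now plays the role of the $q$-integral of Section~\ref{susy}, but because two flavors are coupled its diagonalization introduces a Vandermonde factor $(R_1-R_2)^2$ over the two eigenvalues $R_1,R_2$ -- exactly the structure already visible in the $z=0$ formula (\ref{comp1}). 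Carrying out these eigenvalue integrals replaces the single incomplete $\Gamma$-function of the real case by the bilinear combinations $d_1^{(N)},d_2^{(N)}$ of (\ref{LaplacecompFINALcoeffA})--(\ref{LaplacecompFINALcoeffB}), and regrouping the various contributions produces $D_1^{(N)},D_2^{(N)}$. Finally, parametrizing the $\hat Q$-domain as in (\ref{Qparam}) and renaming the radial variable $Q_1\to t$ exhibits the whole expression as a Laplace transform in $t$, which one reads off as (\ref{LaplacecompFINAL}); the overall constant is fixed a posteriori from the $p\to0$ limit by matching $\mathcal{L}_z(0)$ to the known mean density (\ref{denscomp}), just as in Corollary~3.4.

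The hardest part will be the flavor combinatorics unleashed by the matrix Hubbard--Stratonovich transformation. Unlike the single scalar $q$ of the real case, the $2\times 2$ coupling generates a proliferation of terms, and arranging the auxiliary-matrix (equivalently, two-eigenvalue Vandermonde) integrations so that the resulting products of incomplete $\Gamma$-functions organize cleanly into $d_1^{(N)},d_2^{(N)}$ and then into the compact coefficients $D_1^{(N)},D_2^{(N)}$ is where the bookkeeping becomes genuinely heavy -- which is presumably why symbolic computation is ultimately invoked to finish the ensuing integrals.
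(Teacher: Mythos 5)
Your proposal retraces the paper's own proof essentially step for step: two flavors of anticommuting vectors for the squared determinant, two commuting complex $N$-vectors for the denominator as in (\ref{invdetGaussCOMP}), averaging via $\left\langle e^{-\Tr(GA+G^*B)}\right\rangle_{Gin_2}=e^{\Tr(AB)}$, a $2\times 2$ matrix Hubbard--Stratonovich transformation (\ref{HSCOMP}) to decouple the flavor-carrying quartic Grassmann terms, reduction of the $\mathbf{S}$-integrals to the Hermitian overlap matrix $\hat Q$ of (\ref{QC}) with Jacobian $\det^{N-2}{\hat Q}$, a Vandermonde-squared factor $(R_{F1}-R_{F2})^2$ from the auxiliary $2\times 2$ matrix, the parametrization (\ref{QparamCOMP}) exhibiting the Laplace structure in $t=Q_1$, and a final symbolic clean-up. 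One small imprecision: since $\hat Q_F$ is a general complex matrix, the relevant decomposition is the singular value decomposition $\hat Q_F=U\,\mbox{diag}(\sqrt{R_{F1}},\sqrt{R_{F2}})\,V^*$, not an eigenvalue diagonalization; the weight (\ref{FFint}) runs over squared singular values.

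The genuine gap is your normalization step, which as stated is circular. You propose to fix the overall constant of (\ref{LaplacecompFINAL}) by letting $p\to 0$ in (\ref{LaplaceComplex}) and matching $\mathcal{L}_z(0)$ to the known density (\ref{denscomp}), ``just as in Corollary 3.4''. But in the complex case the proportionality constant in (\ref{LaplaceComplex}) --- inherited from the unspecified normalization ${\cal C}_{2,N}$ of the partial Schur decomposition measure (\ref{volchangeComplex}) --- is itself unknown at that stage, and the paper fixes \emph{it} by that very same density matching. With two unknown constants and one matching condition you determine only their product, not the constant asserted in the Proposition. Note that in the real case no circularity arises precisely because the constant $C_N$ of Proposition 3.3 is pinned down \emph{before} Corollary 3.4, via the $p\to\infty$ limit (\ref{limit1})--(\ref{const}), where the coefficient of $\lambda^{2N}$ in $\left\langle \det^2{(\lambda I_N-G)} \right\rangle$ must equal one. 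You need an analogous independent anchor here: either track all multiplicative constants through the Gaussian and Berezin integrations, or use the $p\to\infty$ asymptotics of (\ref{DetComplex}) (where $p^N\mathcal{D}^{(2)}_{N,2}\to\left\langle\det^2{\left[(z I_N-G)(\overline{z}I_N-G^*)\right]}\right\rangle$ with leading coefficient one in $|z|^{4N}$), or do what the paper does: verify the candidate formula at $z=0$ against the independently derived (\ref{LaplaceComplexFin}) of Section 4.1.1, using $D_1^{(N+1)}\big|_{|z|=0}=N(N+1)\,d_1^{(N+1)}\big|_{|z|=0}=N!\,(N+1)!$. Only once the Proposition's constant is anchored this way can the density matching be used, as in the paper, to fix the constant in (\ref{LaplaceComplex}).
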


The proof is very similar to the real case, and is outlined below.
\begin{proof}
One starts with using two copies of the set of four anticommuting vectors, namely $\Psi_{A1},\,\Psi_{A2},\,\Phi_{A1},\,\Phi_{A2}$ and $\Psi_{B1}\,\Psi_{B2},\,\Phi_{B1},\,\Phi_{B2}$, to represent separately two determinants in the numerator via Gaussian integrals, see
(\ref{detGrassmanGauss}). At the same time,  one needs two commuting vectors ${\bf S}_1, {\bf S}_2$ with $N$ {\it complex-valued} components to represent the denominator:
\begin{equation}\label{invdetGaussCOMP}
\det^{-1}{\begin{pmatrix}
  \sqrt{p}\,I_N  & i(z I_N-G) \\
  i(\overline{z} I_N-G^*) & \sqrt{p}\,I_N
\end{pmatrix}}
 \propto  \int_{\mathbb{C}^N} d\mathbf{S}_{1} d\mathbf{S}^*_{1}\int_{\mathbb{C}^N}d\mathbf{S}_{2}\, d\mathbf{S}_{2}^* e^{-(\mathbf{S}_{1}^*,\mathbf{S}_{2}^*)\begin{pmatrix}
   \sqrt{p}\,I_N   & i(z\, I_N-G) \\
  i(\overline{z} I_N-G^*) &  \sqrt{p}\,I_N
\end{pmatrix}\begin{pmatrix}\mathbf{S}_{1} \\ \mathbf{S}_{2}\end{pmatrix} }\,.
\end{equation}

 The ensemble averaging is performed by exploiting $\beta=2$ analogue
of (\ref{Gauident})
\begin{equation}\label{GauidentCOMP}
\left\langle
e^{-\Tr(GA+G^*B)}
\right\rangle_{Gin_2}=e^{\Tr(AB)}\,.
\end{equation}
Performing the average one collects all terms in the exponential which are quartic in anticommuting variables, e.g. $\left(\Psi_{A1}^T\Phi_{A1}\right)\left(\Psi_{A2}^T\Phi_{A2}\right)$, etc.. The corresponding exponential factor can be then represented
via a matrix version of the Hubbard-Stratonovich transformation  generalizing (\ref{HSsimple}):
\begin{equation}\label{HSCOMP}
\exp{{\Tr}\left(\begin{array}{cc}\Psi_{A1}^T\Phi_{A1} &\Psi_{A1}^T\Phi_{B1}\\ \Psi_{B1}^T\Phi_{A1} & \Psi_{B1}^T\Phi_{B1}
\end{array}\right)\left(\begin{array}{cc}\Psi_{A2}^T\Phi_{A2} &\Psi_{A2}^T\Phi_{B2}\\ \Psi_{B2}^T\Phi_{A2} & \Psi_{B2}^T\Phi_{B2}
\end{array}\right)}
\end{equation}
\[
=\int d\hat{Q}_F\, d\hat{Q}_F^*\, \exp{-\Tr \left(\hat{Q}_F^*\, \hat{Q}_F\right)-\Tr\left(\hat{Q}_F\left(\begin{array}{cc}\Psi_{A1}^T\Phi_{A1} &\Psi_{A1}^T\Phi_{B1}\\ \Psi_{B1}^T\Phi_{A1} & \Psi_{B1}^T\Phi_{B1}
\end{array}\right)+\hat{Q}_F^*\left(\begin{array}{cc}\Psi_{A2}^T\Phi_{A2} &\Psi_{A2}^T\Phi_{B2}\\ \Psi_{B2}^T\Phi_{A2} & \Psi_{B2}^T\Phi_{B2}
\end{array}\right)\right)}\,,
\]
 with $\hat{Q}_F,\hat{Q}_F^*$ being a pair of general $2\times 2$ complex conjugate matrices. This trick allows to integrate out the vectors with
 anticommuting component completely. The analogue of (\ref{intermed}) takes the form
\begin{equation}\label{intermedCOMP}
\mathcal{D}_{N,2}^{(2)}(z,p)\propto \int d\hat{Q}_F\,d\hat{Q}_F^*\, e^{-\Tr\hat{Q}_F\,\hat{Q}_F^*} \int d\mathbf{S}_{1} d\mathbf{S}^*_{1} d\mathbf{S}_{2} d\mathbf{S}_{2}^*\, e^{-\sqrt{p}(\mathbf{S}_{1}^*\mathbf{S}_{1}+\mathbf{S}_{2}^*\mathbf{S}_{2})-
iz\mathbf{S}_{1}^*\mathbf{S}_{2}-i\overline{z}\mathbf{S}_{2}^*\mathbf{S}_{1}-(\mathbf{S}_{1}^*\mathbf{S}_{1})(\mathbf{S}_{2}^*\mathbf{S}_{2})}
 \end{equation}
\[
\times \det{\begin{pmatrix}
  \hat{Q}_F\otimes I_N  &  \left(i\,z I_N+\textbf{S}_1\otimes \textbf{S}_2^*\right)\otimes I_2 \\
 \left( i\, \overline{z} I_N + \textbf{S}_2\otimes \textbf{S}_1^*\right)\otimes I_2  & \hat{Q}_F^*\otimes I_N
\end{pmatrix}}
\]
At the next step one can simplify the above expression by employing the singular value decomposition $\hat{Q}_F=U\mbox{diag}(\sqrt{R_{F1}},\sqrt{R_{F2}})V^{*}, \hat{Q}_F=V\mbox{diag}(\sqrt{R_{F1}},\sqrt{R_{F2}})U^{*}$ with unitary $U,V$ and
$R_{F1},R_{F2}\ge 0$ and replacing the integration over complex vectors $\textbf{S}_{1,2}$ with one over the Hermitian positive semidefinite matrix (cf. (\ref{QR})):
 \begin{equation} \label{QC}
 \hat{Q}=\begin{pmatrix}
  Q_1 & Q \\
  \overline{Q} & Q_2
  \end{pmatrix}, \quad Q_1= \mathbf{S}_{1}^*\mathbf{S}_{1}, \, Q_2= \mathbf{S}_{2}^*\mathbf{S}_{2}, \, Q=\mathbf{S}_{1}^*\mathbf{S}_{2}, \,  \overline{Q}=\mathbf{S}_{2}^*\mathbf{S}_{1}
\end{equation}
After straightforward algebraic manipulations this allows to represent  (\ref{intermedCOMP}) for $N\ge 2$ as
\begin{equation}\label{intermedCOMP1}
\mathcal{D}_{N,2}^{(2)}(z,p)\propto \int  d R_{F1}  d R_{F2} (R_{F1}-R_{F2})^2\, e^{-(R_{F1}+R_{F2})} \int_{\hat{Q}\ge 0} d{\hat{Q}}\, \det^{N-2}{\hat{Q}}
e^{-\sqrt{p}(Q_{1}+Q_{2})-i\left(z\, Q+\overline{z}\overline{Q}\right)-Q_1Q_2}
 \end{equation}
\[
\times \prod_{k=1}^2 \left(R_{Fk}+|z|^2\right)^{N-2}\left[ \left(R_{Fk}+|z|^2\right)^{2}- \left(R_{Fk}+|z|^2\right)\left(Q_1Q_2+i\left(z\, Q+\overline{z}\overline{Q}\right)\right)+|z|^2\left(Q_1Q_2-|Q|^2\right)\right]
\]
The Hermitian matrix $\hat{Q}\ge 0$ can be parametrized very similarly to (\ref{Qparam}). Namely, writing for the complex variable $Q=\rho\,e^{i\phi}$ and using $r=\left(\det{\hat{Q}}\right)^{1/2}\ge 0$
together with $Q_1\ge 0$ as the coordinates, the domain of integration is parametrized by matrices
\begin{equation}\label{QparamCOMP}
 \hat{Q}=\begin{pmatrix}
  Q_1 & \rho\,e^{i\phi} \\
   \rho\,e^{-i\phi} & \frac{r^2+\rho^2}{Q_1}
  \end{pmatrix}, \quad d\hat{Q}=2\, \frac{dQ_1}{Q_1}\, r\,dr \, \rho\,d\rho d\phi ; \quad 0 \le \phi \le 2\pi,\, 0\le Q_1,\rho,r <\infty.
\end{equation}
In this way we arrive at an analogue of (\ref{Laplace1b}):
\begin{equation}\label{Laplace1bcomp}
\mathcal{D}^{(2)}_{N,2}(z,p) =  e^{2|z|^2}  \int_{\mathbb{R}_{+}}\frac{dQ_1}{Q_1}e^{-p Q_1} \int_{\mathbb{R}_{+}} dr \,
 r^{2N-
1}e^{-r^2\left(1+\frac{1}{Q_1}\right)} \int_{\mathbb{R}} \frac{d\rho  \rho}{\sqrt{2\pi}} e^{-\rho^2\left(1+\frac{1}{Q_1}\right)}
\int _0^{2\pi} \frac{d\phi}{2\pi} e^{-2i |z| \rho \cos{\phi}}
\end{equation}
\[\times  \left\langle \left[R^2_{F1}-R_{F1}(2i|z| \rho \cos{\phi}+\rho^2+r^2)+|z|^2r^2\right]\left[R^2_{F2}-R_{F2}(2i|z| \rho \cos{\phi}+\rho^2+r^2)+|z|^2r^2\right]\right\rangle_{FF}
\]
where we denoted
\begin{equation}\label{FFint}
\left\langle (\ldots)\right\rangle_{FF}=\int_{|z|^2}^{\infty} dR_{F1}\int_{|z|^2}^{\infty} dR_{F2}(R_{F1}-R_{F2})^2e^{-R_{F1}-R_{F2}} (\ldots)
\end{equation}
The integrals over $\rho$, $r$, $dR_{F1}$ and $dR_{F2}$ can be performed. Namely, denoting $r^2=R_B$ and $\rho^2=R$ we introduce
a function, cf. (\ref{defA}),
\begin{equation}\label{defB}
B_{N}(s,t):=\int_{\mathbb{R}_{+}} dR_B \, R_B^{N}e^{-R_B(s+1/t)}\int_{\mathbb{R}} dR  e^{-R\left(s+\frac{1}{t}\right)}\int _0^{2\pi} \frac{d\phi}{2\pi} e^{-2i |z| s \sqrt{R} \cos{\phi}}
\end{equation}
\[
= \Gamma\left(\frac{N+1}{2}\right)\left(s+\frac{1}{t}\right)^{-(N+2)}
e^{-\frac{s^2|z|^2}{s+\frac{1}{t}}}
\]
After further denoting $e^{|z|^2}\Gamma(N,\lambda^2):= \gamma\left(N,|z|^2\right)$ we then define the functions
\begin{equation}\label{defc}
C_{N}^{(k,l)}(|z|^2)=\gamma\left(N-k+3,|z|^2\right)\gamma\left(N-l+1,|z|^2\right)+\gamma\left(N-l+3,|z|^2\right)\gamma\left(N-k+1,|z|^2\right)
\end{equation}
\[
-2\gamma\left(N-k+2,|z|^2\right)\gamma\left(N-l+2,|z|^2\right)
\]
and renaming $Q_1\to t$ we notice that (\ref{Laplace1bcomp}) can be represented, after restoring the normalization constants, as
\begin{equation}\label{Laplace2bcomp}
\mathcal{D}^{(2)}_{N,2}(z,p) = \frac{1}{2(N-1)!(N-2)!}\lim_{s\to 1}  \int_{\mathbb{R}_{+}}\frac{dt}{t}e^{-p t} \left\{ C_{N}^{(0,0)}(|z|^2)B_{N-2}(s,t)+2C_{N}^{(0,1)}(|z|^2)\frac{dB_{N-2}}{ds}\right.
\end{equation}
\[
\left.+2|z|^2C_{N}^{(0,2)}(|z|^2)B_{N-1}(s,t)+C_{N}^{(1,1)}(|z|^2)\frac{d^2B_{N-2}}{ds^2}+2|z|^2C_{N}^{(1,2)}(|z|^2)\frac{dB_{N-1}}{ds}+
|z|^4C_{N}^{(2,2)}(|z|^2)B_{N}(s,t)\right\}
\]
Substituting (\ref{defB}) to the above and taking the limit, the expression can be further simplified with the help of symbolic manipulations using Wolfram Mathematica and is fianlly represented as (\ref{LaplacecompFINAL}).
In particular, it is easy to see that $D^{(N+1)}_1|_{|z|=0}=N(N+1)d_1^{(N+1)}|_{|z|=0}=N!(N+1)!$ so that (\ref{LaplacecompFINAL}) at $|z|=0$ indeed reproduces (\ref{LaplaceComplexFin}).
\end{proof}

\end{document}